\theoremstyle{plain}
\newtheorem{theorem}{Theorem}[section]
\newtheorem{lemma}[theorem]{Lemma}
\newtheorem{corollary}[theorem]{Corollary}
\newtheorem{problem}[theorem]{Problem}
\theoremstyle{definition}
\newtheorem{definition}[theorem]{Definition}
 \newcommand{\induce}[2]{\mbox{$ #1 \langle #2 \rangle$}}
\begin{document}
\bibliographystyle{plain}
\title{DAG-width and circumference of digraphs}
  \author{J\o{}rgen Bang-Jensen and Tilde My Larsen\thanks{Department of Mathematics and Computer Science, University of Southern Denmark, Denmark}}
  \date{\today}
\maketitle 

\begin{abstract}
We prove that every digraph of circumference $l$ has DAG-width at most $l$ and this is best possible. As a consequence of our result we deduce that the 
$k$-linkage problem is polynomially solvable for every fixed $k$ in the class
of digraphs with bounded circumference. This answers a question posed in \cite{bangTCS562}. We also prove that the weak $k$-linkage problem (where we ask for arc-disjoint paths) is polynomially solvable for every fixed $k$
in the class of digraphs with circumference 2 as well as for digraphs with a bounded number of disjoint cycles each of length at least 3. The case of bounded circumference digraphs is open. Finally we prove that the minimum spanning strong subdigraph problem is NP-hard on digraphs of DAG-width at most 5.\\

\noindent{}{\bf Keywords:} DAG-width, k-linkage problem, bounded cycle length, polynomial algorithm, cops and robber game
\end{abstract}

\section{Introduction}

Terminology and  notation not described below follows \cite{bang2009}.
A digraph $D=(V,A)$ has vertex set $V$ and arc set $A$. The {\bf out-degree} ({\bf in-degree}), denoted $d^+(v)$ ($d^-(v)$), of a vertex $v$ is the number of arcs from $v$ to $V-v$ (from $V-v$ to $v$). For $X,Y\subset V$ with $X\cap Y=\emptyset$ an $(X,Y)$-path is a directed path starting in $X$ and ending in $Y$ and with all other vertices in 
$V-(X\cup Y)$. For a directed path or cycle  $P$ containing vertices $u$, $v$ with $u$ before $v$ on $P$ we denote by $P[u,v]$ the subpath of $P$ from $u$ to $v$. We use the notation $[k]$ for the set $\{1,2,\ldots{},k\}$. A $k$-cycle is a directed cycle with $k$ vertices. The {\bf circumference} of a digraph $D$ is the length of a longest directed cycle in $D$. A directed acyclic graph (DAG) is a digraph without directed cycles.

For a given natural number $k$ the $k$-linkage problem is as follows: Given a digraph $D$ and $2k$ distinct vertices $s_1,\ldots{},s_k,t_1,\ldots{}t_k$ (called {\bf terminals}); determine whether $D$ has $k$ disjoint paths $P_1,\ldots{},P_k$ such that $P_i$ is an $(s_i,t_i)$-path for $i\in [k]$.
While the undirected analogue of the $k$-linkage problem is polynomial for every fixed $k$ by the Robertson-Seymour theorem \cite{robertsonJCT63}, the directed version is NP-complete already for $k=2$ \cite{fortuneTCS10}. The problem is known to be polynomially solvable for fixed $k$ when $D$ belongs to one of the following classes of digraphs: acyclic digraphs \cite{fortuneTCS10}, semicomplete digraphs \cite{bangSJDM5,chudnovskysub}, digraphs of bounded directed-tree-width \cite{johnsonJCT82} (this includes digraphs of bounded DAG-width which will be defined later), digraphs of bounded Kelly-width \cite{hunterTCS399}  and $d$-path-dominant digraphs \cite{chudnovskysub}. A digraph $D$ is $d$-path-dominant for some $d\geq 1$ if every minimal path\footnote{A path $P$ from a vertex $x$ to a vertex $y$ is {\bf minimal} if there is no $(x,y)$-path $P'$ such that $V(P')$ is a proper subset of $V(P)$.} $P$ of $D$ with $d$ vertices has the property that there is at least one arc between every vertex of $D-V(P)$ and $V(P)$. Thus the 1-path-dominant digraphs are the semicomplete digraphs (every pair of distinct vertices have at least one arc between them).

In \cite{bangTCS562} it was asked whether the $k$-linkage problem would be polynomially solvable for digraphs of bounded circumference (see also \cite{havetsub}). This can be seen as a generalization of the result by Fortune et al for acyclic digraphs. In this paper we answer the question in the affirmative by showing that digraphs of bounded circumference have bounded DAG-width (defined below) and hence also bounded directed tree-width. From this the result follows since the $k$-linkage problem is polynomial for digraphs of bounded directed tree-width \cite{johnsonJCT82}. 

We also consider the weak $k$-linkage problem where we ask for arc-disjoint paths $P_1,\ldots{},P_k$ such that $P_i$ is an $(s_i,t_i)$-path for $i\in [k]$. Now we may have $|\{s_1,\ldots{},s_k,t_1,\ldots{},t_k\}|<2k$ (e.g. $s_2=s_6=t_1$).
We prove that for every fixed natural number $k$ the weak $k$-linkage problem is polynomially solvable for digraphs of circumference 2, digraphs with no closed trail longer than some constant and for digraphs having no set of $l$ disjoint cycles each of length at least 3 (the proof of the latter uses a result from \cite{havetsub}, see Theorem \ref{EPdigraphs}).

\section{DAG-width: definitions and some results}
Robertson and Seymour introduced the concept  tree-width of undirected graphs. The tree-width  measure has many nice properties, including polynomial solutions for many NP-complete problems on graphs of bounded tree-width. Several attempts have been made to find a measure for directed graphs with similar properties as tree-width for undirected graphs. Unfortunately there is evidence \cite{ganianLNCS6478} that in some sense none can exist,  but several measures that work nicely on certain problems have been made. In the following we will consider one of these, the DAG-width of a directed graph. The theory of this section is based on 
\cite{obdrzalekSODA2006,berwangerJCT102}. We will start by defining the concept of DAG-width and then relate this to a directed version of the cops and robbers game. As the name suggests, the DAG-width of a digraph is a measure of how close it is to being acyclic. 

\begin{definition}
For an acyclic digraph $D=(V,A)$ we define the partial ordering $\leq_D$ to be the reflexive, transitive closure of the arcs of $D$. A {\bf root} of a set $X \subseteq V$ is then a $\leq_D$-minimal element of $X$ and analogously a {\bf leaf} of a set $X \subseteq V$ is a $\leq_D$-maximal element.
\end{definition}

\begin{definition}
 Let $D=(V,A)$ be a directed graph with vertex set $V$ and arc set $A$. We say a set $W \subseteq V$ {\bf guards} a set $V' \subseteq V$ if every arc leaving $V'$ is incident with $W$, i.e. for all $uv\in A$ with $u \in V'$ and $v \notin V'$ we will have $v \in W$. 
\end{definition}

We can now define the DAG-width. 
\begin{definition}\cite{obdrzalekSODA2006}
Let $D=(V,A)$ be a directed graph. Then a {\bf DAG-decomposition} of $D$ is a pair $(H, \mathcal{X})$, where $H$ is a DAG and $\mathcal{X}=(X_h)_{h \in V(H)}$ such that
\begin{description}
  \item[\textnormal{D1})] $\bigcup_{h \in V(H)} X_h = V(D)$
  \item[\textnormal{D2)}] For all vertices $h,h',h''\in V(H)$ such that 
 $h \leq_H h' \leq_H h''$ we have $X_h \cap X_{h''}\subseteq X_{h'}$
  \item[\textnormal{D3)}] For all arcs $(h,h')\in A(H)$, $X_h \cap X_{h'}$ guards $X_{\geq {h'}} \backslash X_h$ where $X_{\geq {h'}} = \cup_{h'\leq_H h''} X_{h''}$. For any root $h$, the set $X_{\geq h}$ is guarded by $\emptyset$.
\end{description}

The {\bf width} of the DAG-decomposition $(H,\mathcal{X})$ is defined as $\max \{|X_h| : h \in V(H)\}$ and the {\bf DAG-width} of $D$ is the minimum width over all DAG-decompositions of $D$. The DAG-width of an acyclic graph is 1 (just let $V(H)=V(D)$ and $X_h=\{h\}$ for each $h\in V(H)$). 
 \end{definition}

Notice that given a DAG-decomposition $(H,\mathcal{X})$ of $D$ which does not have a unique root, we can always add the empty set to $\mathcal{X}$ and a corresponding new vertex to $H$, such that the new vertex has an arc to every root of $H$. This will give a new DAG-decomposition of $D$, with the same width as $(H,\mathcal{X})$, but with a unique root. Hence we may always assume that given a DAG-decomposition of $D$, it has  a unique root. It can also be shown that the DAG-width of a digraph is equal to the maximum of the DAG-widths of its strongly connected components \cite{obdrzalekSODA2006}.

\begin{theorem}[\cite{obdrzalekSODA2006}] It is NP-hard to decide
 for inputs $D$ and $k$ whether the DAG-width of  $D$ is at most $k$. 
\end{theorem}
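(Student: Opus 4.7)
The plan is to reduce from the classical NP-hard problem of deciding whether the tree-width of an undirected graph is at most a given integer (Arnborg, Corneil, Proskurowski). Given an instance $(G,k)$ with $G=(V,E)$ undirected, I would form the \emph{biorientation} $\overleftrightarrow{G}=(V,A)$, replacing each edge $\{u,v\}\in E$ by the two arcs $(u,v)$ and $(v,u)$. The central claim to prove is
\[
\text{tw}(G)\le k \iff \text{DAG-width}(\overleftrightarrow{G})\le k+1,
\]
which yields NP-hardness of DAG-width via this polynomial reduction.

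For the $(\Rightarrow)$ direction, I would start from a tree-decomposition $(T,\mathcal{Y})$ of $G$ of width $k$, root $T$ arbitrarily and orient every edge away from the root to obtain a DAG $H$, keeping the same bags. Axioms D1 and D2 follow directly from the corresponding tree-decomposition axioms. For D3 I would use the standard fact that in a tree-decomposition the intersection $X_h\cap X_{h'}$ of two adjacent bags vertex-separates, in $G$, the vertices appearing only in the subtree rooted at $h'$ from the rest; because every arc of $\overleftrightarrow{G}$ is paired with its reverse, vertex separation in $G$ is exactly the guarding condition required by D3.

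For the harder $(\Leftarrow)$ direction, I would argue via the cops-and-robbers characterisation of DAG-width that is developed later in the paper: DAG-width equals the minimum number of cops having a monotone winning strategy in the directed visible-robber game, while by the Seymour--Thomas theorem the corresponding number for the undirected game on $G$ equals $\text{tw}(G)+1$. The key observation is that the robber's legal moves in $\overleftrightarrow{G}$ are exactly the walks in $G$, so the two games coincide play-for-play; consequently a monotone $(k+1)$-cop winning strategy on $\overleftrightarrow{G}$ gives one on $G$ and conversely, forcing equality of the two widths.

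The main obstacle is the $(\Leftarrow)$ direction. A DAG-decomposition can have its underlying DAG $H$ branching and re-merging in ways no tree can, so directly extracting a tree-decomposition from $(H,\mathcal{X})$ would require care to preserve both bag connectivity and the width bound. Routing the argument through the game characterisation makes this structural issue disappear, but then one has to invoke the non-trivial equivalence between monotone and non-monotone cop strategies for the directed visible-robber game, shown in \cite{berwangerJCT102}, since only the monotone version is known to coincide exactly with DAG-width. Verifying that the monotonicity of a cop strategy on $\overleftrightarrow{G}$ transfers faithfully to a monotone strategy on $G$ is the most delicate step of the reduction.
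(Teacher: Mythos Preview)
The paper does not prove this statement at all; it merely cites it from \cite{obdrzalekSODA2006}. Your reduction from tree-width via the biorientation $\overleftrightarrow{G}$ is exactly the argument used in the cited sources, and indeed the paper itself invokes the key identity $\text{tw}(G)=\text{DAG-width}(\overleftrightarrow{G})-1$ later (in the corollary recovering Birmel\'e's bound), attributing it to \cite{berwangerJCT102}. So your plan is correct and matches the literature.

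One point in your discussion of obstacles is off, however. You write that for the $(\Leftarrow)$ direction ``one has to invoke the non-trivial equivalence between monotone and non-monotone cop strategies for the directed visible-robber game, shown in \cite{berwangerJCT102}''. That equivalence is \emph{not} what \cite{berwangerJCT102} proves (Lemma~\ref{lemmaCopRobber} here only says cop-monotone $\Leftrightarrow$ robber-monotone), and in fact monotone and non-monotone cop numbers for the DAG-width game are known to differ in general. Fortunately you do not need any such statement: Theorem~\ref{ThmCopsVSDag} already identifies DAG-width with the \emph{monotone} directed cop number, Seymour--Thomas identifies $\text{tw}(G)+1$ with the \emph{monotone} undirected cop number, and since the directed game on $\overleftrightarrow{G}$ is literally the same game as the undirected one on $G$ (same positions, same legal moves for both players), a strategy that is cop-monotone in one is cop-monotone in the other. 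The transfer of monotonicity is therefore immediate, not delicate, and the reduction goes through without appealing to any monotone/non-monotone collapse for directed graphs.
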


We will now give two results that are related to another measure, called the directed tree-width \cite{johnsonJCT82}. As the definition  is quite technical and not needed for our purposes, we shall not give it here.

\begin{theorem}[\cite{berwangerJCT102}]\label{ThmDirTree}
If a graph has DAG-width $k$ then it has directed tree-width at most $3k+1$.
\end{theorem}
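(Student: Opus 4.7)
The plan is to exploit the cops-and-robber game characterizations underlying both width measures. Recall that \cite{berwangerJCT102} characterizes DAG-width as the minimum number of cops needed in a robber-monotone winning strategy against an \emph{invisible} robber in a helicopter-style game on $D$, while the theorem of \cite{johnsonJCT82} that really does the heavy lifting states that whenever $k$ cops have a winning strategy against a \emph{visible} robber on $D$, then $D$ admits a directed tree-decomposition of width at most $3k+1$. So the target reduces to producing a winning strategy for $k$ cops against a visible robber from a DAG-decomposition of width $k$.

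First I would show that a DAG-decomposition $(H,\mathcal{X})$ of $D$ of width $k$ yields a winning strategy for $k$ cops against an invisible robber on $D$. The idea is to walk down $H$ from the root, occupying the bag $X_h$ whenever the current state of the strategy is $h$. Condition D2 ensures that any vertex once evacuated by the cops is never revisited, which gives robber-monotonicity; condition D3 says that at a transition $(h,h')$ the intersection $X_h\cap X_{h'}$ guards $X_{\geq h'}\setminus X_h$, so a robber trapped in $X_{\geq h}\setminus X_h$ cannot leak outside before the cops reposition to $X_{h'}$ and is therefore forced deeper into the DAG. Iterating and using that $H$ is acyclic, the robber is cornered in finitely many steps. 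Second, I would observe that a strategy that ignores the robber's position — as is automatic here since it was designed for an invisible robber — is \emph{a fortiori} a winning strategy when the robber is visible using the same $k$ cops; plugging this into the JRST game-theoretic upper bound then delivers directed tree-width at most $3k+1$.

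The most delicate step is the first one: turning D3 into a genuinely executable monotone blocking strategy. The guarding condition constrains only individual arcs of $H$, whereas during a transition $(h,h')$ the cops must physically move from $X_h$ to $X_{h'}$ without ever freeing up a route for the robber to sneak back into previously cleared territory. The standard fix is to sequence the moves carefully: keep all cops in $X_h\cap X_{h'}$ in place, lift the cops on $X_h\setminus X_{h'}$ only after the cops on $X_{h'}\setminus X_h$ have landed, and appeal to D3 at the intermediate configuration to argue that the robber's reachable set is still contained in some $X_{\geq g}\setminus X_g$ with $g$ on the current root-to-$h'$ path in $H$. Once this bookkeeping is in hand, the remaining ingredients are off-the-shelf consequences of \cite{berwangerJCT102,johnsonJCT82}.
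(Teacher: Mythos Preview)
The paper does not prove this theorem; it is quoted without proof from \cite{berwangerJCT102}, so there is no in-paper argument to compare against. Your overall plan --- pass through game characterizations and invoke the Johnson--Robertson--Seymour--Thomas bound --- has the right shape and is essentially how \cite{berwangerJCT102} proceeds. However, a central factual claim is wrong: DAG-width is characterized by a \emph{visible}-robber game, not an invisible one (this is precisely Theorem~\ref{ThmCopsVSDag} above; the invisible-robber variant corresponds to directed path-width). Indeed, your own strategy of ``walking down $H$'' needs to see the robber at every branching node of $H$ in order to choose which child to descend to, so it is already a visible-robber strategy and could not work against an invisible robber. The step ``an invisible-robber strategy is a fortiori a visible-robber strategy'' is therefore both unavailable and unnecessary.

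What actually bridges the two width notions is a different comparison of robbers: in the JRST game for directed tree-width the robber is confined to a single strongly connected component of the cop-free digraph, whereas the DAG-width robber may run along any cop-free directed path. Since every vertex in the same strong component is in particular reachable by a directed path, the DAG-width robber is at least as powerful, and hence a winning strategy for $k$ cops in the DAG-width game also wins the JRST game; the JRST theorem then yields directed tree-width at most $3k+1$. Replace your invisible/visible dichotomy by this directed-path versus strong-component comparison, and the remainder of your outline (the guarding and monotone-transition bookkeeping via D2 and D3) is on target.
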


\begin{theorem}[\cite{johnsonJCT82}]
For every fixed natural number $k$ the $k$-linkage problem is polynomially solvable for directed graphs of bounded directed tree width.
\end{theorem}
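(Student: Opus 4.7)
The plan is to prove the statement by dynamic programming on a directed tree decomposition. Either one assumes such a decomposition is given as input, or one first computes one of width $w$ (the approximation algorithm for directed tree-width from \cite{johnsonJCT82} itself runs in polynomial time for fixed $w$). Let the arborescence be $T$ with root $r$ and bags $(W_t)_{t\in V(T)}$; write $W_{\ge t}=\bigcup_{t\le_T t'}W_{t'}$ for the vertices in the subtree rooted at $t$, and recall that by the definition of directed tree-width the bag $W_t$ together with the guard of $W_{\ge t}$ controls the way strongly connected subsets intersect $W_{\ge t}$.

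For each node $t\in V(T)$ I would store a table indexed by \emph{states}, where a state encodes what a partial solution looks like inside $W_{\ge t}$ as seen from $W_t$. Concretely, a state records: (i) which of the $2k$ terminals already lie in $W_{\ge t}$; (ii) for every index $i\in[k]$ whose path $P_i$ is not yet complete, an ordered sequence of \textquotedblleft visits\textquotedblright\ of $P_i$ to $W_t$, each labelled as entry or exit and tagged with the vertex of $W_t$ used; and (iii) the set of indices whose $(s_i,t_i)$-path has already been fully routed in $W_{\ge t}$. The decisive technical ingredient, which I would have to prove separately, is a \emph{normal form} lemma: any $k$-linkage in $D$ can be rerouted so that for each node $t$ each path $P_i$ visits $W_t$ at most $f(k)$ times, for some function $f$ depending only on $k$. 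This uncrossing argument, tailored to directed tree decompositions and exploiting that guards bound strongly connected interactions across a bag, is the hard part. Once it is in hand, the number of states at each $t$ is bounded by $(|W_t|\cdot k)^{O(f(k))}$, which is polynomial in $|V(D)|$ for fixed $k$ and $w$.

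The DP then proceeds bottom-up. At a leaf $t$ the realisable states are enumerated directly by brute force inside the constant-sized bag $W_t$. At an internal node $t$ with children $c_1,\dots,c_d$, I would combine realisable states of the children by guessing, on the vertices of $W_t$, how the entry/exit sequences of the children glue together and how any path fragments routed inside $W_t$ itself are inserted; vertex-disjointness of the combined fragments is checked by inspection. A $k$-linkage exists in $D$ iff the root $r$ admits a state in which every terminal has been visited and every index is marked complete. The main obstacle is proving the normal form lemma: without it the entry/exit sequences could have length $\Omega(|V(D)|)$ and the state space would blow up; all the hard combinatorial work of \cite{johnsonJCT82} is concentrated there, while the DP itself, though notationally heavy, is routine.
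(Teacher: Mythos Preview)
The paper does not prove this theorem; it is simply quoted from \cite{johnsonJCT82} and used as a black box (together with Theorem~\ref{ThmDirTree}) to obtain Corollary~\ref{CorBoundDag}. There is therefore no proof in the present paper to compare your proposal against.

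That said, your sketch is a fair high-level outline of the argument actually given in \cite{johnsonJCT82}: dynamic programming over an arboreal decomposition, with the crux being a bound on how often each path of a linkage can cross in and out of the vertex set below a given edge of the arborescence. One small correction: the crossing bound established in \cite{johnsonJCT82} depends on the width $w$ (essentially on the size of the guard sets), not only on $k$; since both $k$ and $w$ are fixed this does not affect the polynomial-time conclusion, but your $f(k)$ should really be $f(k,w)$. You are right that the guard condition in directed tree-width only constrains strongly connected behaviour, which is exactly why extracting a crossing bound for an arbitrary path is delicate, and you correctly locate all the real work there; the remaining dynamic-programming bookkeeping is, as you say, routine once that lemma is in hand.
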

 
\begin{corollary}[\cite{obdrzalekSODA2006}]\label{CorBoundDag}
 For every fixed natural number $k$ the $k$-linkage problem is polynomially solvable for directed graphs of bounded DAG-width.\qed
\end{corollary}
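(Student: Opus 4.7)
The plan is essentially to compose the two results stated immediately before the corollary. Given a fixed $k$ and an input digraph $D$ of DAG-width at most $k$, Theorem~\ref{ThmDirTree} guarantees that $D$ has directed tree-width at most $3k+1$, which is again a constant depending only on $k$. Since the class of digraphs of directed tree-width at most $3k+1$ is exactly a class on which the Johnson--Robertson--Seymour--Thomas result (the theorem cited just above) yields a polynomial algorithm for the $k$-linkage problem, the corollary follows.

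Writing the argument out, I would proceed in three short steps. First, assume we are given $D$ together with a DAG-decomposition $(H,\mathcal{X})$ of width at most $k$; if such a decomposition is not supplied as part of the input, one needs to address its computation separately (see the obstacle below). Second, invoke Theorem~\ref{ThmDirTree} to convert the bound on DAG-width into a bound of $3k+1$ on directed tree-width; the proof of that theorem in \cite{berwangerJCT102} is constructive, so from $(H,\mathcal{X})$ one can actually produce in polynomial time a directed tree-decomposition of width at most $3k+1$. Third, feed this directed tree-decomposition into the algorithm of \cite{johnsonJCT82} for the $k$-linkage problem, whose running time is polynomial in $|V(D)|$ for every fixed value of the width and of $k$.

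The only subtlety, and the main obstacle, is algorithmic: by the NP-hardness theorem stated just before Theorem~\ref{ThmDirTree}, we cannot in general compute an optimal DAG-decomposition efficiently, and it is not stated in the excerpt whether a width-$k$ DAG-decomposition can be computed in polynomial time when it exists. However, for the corollary it is enough that \emph{some} directed tree-decomposition of width bounded in terms of $k$ can be found in polynomial time, and this is guaranteed by the approximation algorithm of Johnson--Robertson--Seymour--Thomas for directed tree-width; so one can either input a DAG-decomposition together with $D$, or bypass the DAG-decomposition entirely and approximate the directed tree-width directly, then apply \cite{johnsonJCT82}. Either route delivers the conclusion, and the statement is literally a corollary of the two theorems above.
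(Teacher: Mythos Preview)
Your proposal is correct and matches the paper's intent: the paper gives no proof at all (the corollary ends with \qed), treating it as an immediate consequence of Theorem~\ref{ThmDirTree} and the Johnson--Robertson--Seymour--Thomas theorem, which is exactly the composition you describe. One small notational slip: in the statement $k$ is the number of paths in the linkage, not the DAG-width bound, so you should keep the two parameters separate; but this does not affect the argument.
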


A useful way to obtain a  DAG-decomposition of a given digraph is via
the game of cops and robber. This game was first introduced by Seymour and Thomas and used in the study of the tree-width in undirected graphs. For the directed case, different variations of the game have been studied. We will pose the game in the context of the DAG-width \cite{obdrzalekSODA2006}. The principle of the game is that a robber is moving around  the digraph from vertex to vertex respecting the orientation of the arcs. The robber can run infinitely fast and wants to avoid getting caught by the cops. He can take any path from his current vertex to another, provided no intermediate vertex on that path is currently occupied by a cop.  The cops can either stand on  vertices of the digraph or  be in a helicopters above the graph (the point of the helicopters is that cops are not constrained to move along paths in the digraph). The cops win if they can land on the vertex occupied by the robber. We say that the cops have a {\bf cop-monotone} strategy if the cops never visit a vertex in the graph more than once. Similarly, the cops have a {\bf robber-monotone} strategy if the the set of vertices the robber can reach (without running through a cop hosting vertex) is non increasing. 


\begin{lemma}\cite{berwangerJCT102})\label{lemmaCopRobber}
If the cop player has a cop-monotone or robber-monotone winning
strategy then he also has a winning strategy that is both cop- and robber-monotone.
\end{lemma}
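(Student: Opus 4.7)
The plan is to prove the lemma by factoring through the known equivalence with DAG-decompositions: I would show that a cop-monotone winning strategy with $k$ cops yields a DAG-decomposition of $D$ of width at most $k$, that a robber-monotone winning strategy with $k$ cops yields the same, and conversely that any DAG-decomposition of width $k$ gives rise to a winning strategy with $k$ cops which is simultaneously cop- and robber-monotone. This triangular equivalence immediately implies the lemma, since either hypothesis feeds into a decomposition and then into a doubly-monotone strategy.

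For the first direction, starting from a cop-monotone winning strategy, I would take as $H$ the DAG of reachable game positions (each position is a pair consisting of the current cop placement and the component of the cop-free digraph hosting the robber), with arcs following the strategy's transitions, and set $X_h$ to be the cop placement at position $h$. Condition (D1) holds because every vertex must be occupied at some point in a winning play. Cop-monotonicity is precisely what is needed for (D2): once a cop is removed from a vertex $v$, no position below in $H$ ever places a cop on $v$ again, so $v$ cannot reappear in a bag between two bags containing it. Condition (D3) follows from the game rules, since the robber only moves through cop-free vertices and hence the cop placement at a position separates the old reachable territory from the new.

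The converse direction from a \emph{robber-monotone} strategy to a decomposition is the step I expect to be the main obstacle. The raw game tree no longer satisfies (D2) automatically, because cops may be replaced on vertices they have already occupied. The plan here is to reorganise the game tree using the fact that the robber's territory never grows: whenever a cop is lifted from $v$ at some position and $v$ is later reused in a descendant position, one argues that because the robber's territory has only shrunk, the descendant subgame can be attached instead at a node where $v$ is still present in the bag. Carrying this rerouting out systematically produces a new DAG whose bags satisfy all three decomposition axioms and whose width is still at most $k$; the delicate bookkeeping is to guarantee acyclicity of the rewritten DAG while preserving (D3).

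Finally, from a DAG-decomposition of width $k$, the doubly-monotone strategy is extracted by traversing $H$ from the root downward: at a node $h$ the cops occupy $X_h$, condition (D3) forces the robber into $X_{\geq h'}\setminus X_h$ for the unique child $h'$ whose subtree contains the robber's current vertex, and the cops then descend to $h'$. Cop-monotonicity follows from (D2), since once a vertex is absent from $X_{h'}$ it cannot return to any bag in the subtree rooted at $h'$. Robber-monotonicity follows because the robber's reachable set is contained in $X_{\geq h'}$ at each step, and $X_{\geq h'}$ strictly shrinks as we descend the DAG. This third direction is routine, so the technical weight of the proof sits in the robber-monotone implication of the middle paragraph.
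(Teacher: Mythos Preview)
The paper does not prove this lemma at all; it merely quotes it from \cite{berwangerJCT102} and uses it as a black box. So there is no ``paper's own proof'' to compare against. Your triangular scheme via DAG-decompositions is in fact the route taken in the cited source (and in \cite{obdrzalekSODA2006}), so you are reconstructing the original argument rather than proposing an alternative.

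On the substance: the first and third directions are standard and your sketches are adequate. The middle step (robber-monotone $\Rightarrow$ decomposition) is where you are vaguest, and your ``rerouting'' idea, while not wrong, is harder to make precise than necessary. The cleaner argument, and the one actually used in \cite{berwangerJCT102}, bypasses the decomposition and shows directly that a robber-monotone strategy can be converted into a cop-monotone one: if a cop is lifted from $v$ when the cops move from $X$ to $X'$, then $v\notin R$ (a cop stood there) and by robber-monotonicity $v\notin R'$ and $v$ stays outside the robber territory forever after. One checks that $v$ is also not needed as a guard of any later $R''\subseteq R'$, so every future placement of a cop on $v$ can simply be omitted without enlarging any $R''$. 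Doing this for all re-visited vertices yields a strategy that is simultaneously cop- and robber-monotone, and then your first direction gives the decomposition. This avoids the ``delicate bookkeeping'' about acyclicity of a rewritten DAG that you flag as the obstacle.

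One small correction: your justification of (D1) (``every vertex must be occupied at some point in a winning play'') needs to be stated for the full strategy DAG over all robber choices, not for a single play; a single play need not visit every vertex.
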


The strategy for the cops  will be to split the graph into strongly connected components in such a way that the set of vertices reachable by the robber (without running into a cop) becomes smaller and smaller until the cops can finally land on the vertex occupied by the robber. Before stating the essential connection between DAG-width and the game of cops and robber, it is helpful to
consider the connection between guarding sets and the game of cops and robber. If a robber is in a set $V'$ and there is a cop on every vertex in $W$ for some guarding set $W$ of $V'$, then the robber cannot leave $V'$ without running to a vertex with a cop and hence getting caught. Now if there are  cops not placed on any vertex, these can land on vertices in $V'$ and hence forcing the robber into a smaller piece of the digraph. 

\begin{theorem}\cite{berwangerJCT102,obdrzalekSODA2006}
\label{ThmCopsVSDag}
A directed graph $D$ has DAG-width $k$ if and only if it takes $k$ cops to catch the robber using a cop-monotone strategy.
\end{theorem}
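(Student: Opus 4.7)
The plan is to prove both implications via a correspondence between nodes of a DAG-decomposition and configurations of a cop-monotone play.

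For the $(\Rightarrow)$ direction, given a DAG-decomposition $(H,\mathcal{X})$ of width $k$, I assume (by the remark following the DAG-decomposition definition) that $H$ has a unique root $r$. I then describe the cop strategy inductively, maintaining the invariant that after some moves the cops occupy $X_h$ for some $h\in V(H)$ and the robber is confined to $X_{\geq h}\setminus X_h$. The initial move is to land the cops on $X_r$; the invariant then holds because $X_{\geq r}=V(D)$. In a single step, by property D3 the current robber region lies entirely in $X_{\geq h'}\setminus X_h$ for some child $h'$ of $h$, since the children's regions are pairwise separated by subsets of $X_h$. The cops now replace their position $X_h$ by $X_{h'}$: during the change the robber cannot cross $X_h\cap X_{h'}$ (which guards $X_{\geq h'}\setminus X_h$ by D3), and after the change he is confined to $X_{\geq h'}\setminus X_{h'}$, restoring the invariant. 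At most $k$ cops are ever on the board, and cop-monotonicity follows from D2: a vertex $v\in X_h\setminus X_{h'}$ cannot lie in any later bag $X_{h''}$ with $h'\leq_H h''$, for otherwise $v\in X_h\cap X_{h''}\subseteq X_{h'}$ would contradict its removal.

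For the $(\Leftarrow)$ direction, I assume a cop-monotone winning strategy $\sigma$ with $k$ cops; by Lemma~\ref{lemmaCopRobber} I may further assume $\sigma$ is also robber-monotone. I let $H$ be the directed graph whose nodes are the reachable configurations $(C,R)$ under $\sigma$ (where $C$ is the current cop set and $R$ the robber's reachable region) and whose arcs follow the transitions prescribed by $\sigma$; set $X_{(C,R)}=C$. Robber-monotonicity prevents configurations from repeating, so $H$ is acyclic, and an induction from the leaves (caught configurations) shows that $X_{\geq h}\setminus X_h=R$ for every node $h=(C,R)$. Then D2 is a direct restatement of cop-monotonicity; D3 holds because the retained cops $X_h\cap X_{h'}$ are exactly those that must prevent the robber from escaping the child's region, and if they failed to guard $X_{\geq h'}\setminus X_h$ the robber would escape during the transition, contradicting the correctness of $\sigma$; and D1 holds because the robber may start at any vertex and must eventually be caught by some cop.

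The main obstacle I expect lies in the backward direction, specifically in establishing acyclicity of $H$ and the identification $X_{\geq h}\setminus X_h=R$. Without the robber-monotone strengthening from Lemma~\ref{lemmaCopRobber} configurations could loop back, breaking both the DAG property of $H$ and the matching between $X_{\geq h}\setminus X_h$ and the robber's region; once robber-monotonicity is in hand, both items follow by induction on configurations, and the rest of the verification reduces to the winning, soundness, and monotonicity guarantees of $\sigma$.
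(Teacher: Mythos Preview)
The paper does not give its own proof of Theorem~\ref{ThmCopsVSDag}; it is quoted from \cite{berwangerJCT102,obdrzalekSODA2006} and used as a black box. So there is no in-paper argument to compare your proposal against, and your outline is in fact the standard argument from those references: descend along the DAG-decomposition to extract a cop-monotone strategy, and conversely turn the play tree of a (cop- and robber-)monotone strategy into a DAG-decomposition.

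Your plan is essentially correct, but one step in the backward direction is stated too casually. You assert that an induction from the leaves gives $X_{\geq h}\setminus X_h=R$ for every configuration $h=(C,R)$. The inclusion $R\subseteq X_{\geq h}\setminus X_h$ is immediate (each vertex of $R$ is a legal robber position and must eventually host a cop), but the reverse inclusion can fail for an arbitrary monotone winning strategy: nothing prevents the cops from occasionally landing on vertices outside $R\cup C$. What you actually need for D3 is only that $X_h\cap X_{h'}$ guards $X_{\geq h'}\setminus X_h$, and this follows without the equality: any vertex in $X_{\geq h'}\setminus X_h$ is a cop position at some descendant configuration and hence (by cop-monotonicity together with robber-monotonicity) lies in $R'\cup C'$; since $C\cap C'$ already guards $R'$ during the transition, and $C'\setminus C\subseteq R$ is also guarded, the conclusion holds. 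Alternatively one first normalises the strategy so cops only land on $R\cup C$, which restores the equality you claim. Either fix is routine, but as written the inductive identity is not justified.
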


\section{Digraphs with bounded circumference}
Using the game of cops and robber we are now ready to prove our main result. A cycle $C$ is {\bf maximal} in $D$ if $D$ has no cycle properly containing the vertices of $C$. Note that in a digraph with bounded circumference one can check maximality of any cycle $C$ and find a larger cycle containing $V(C)$ if one exists in polynomial time.
 
\begin{theorem}
\label{main}
 Let $D$ be a directed graph with circumference at most $p\in \mathbb{N}$. Then the DAG-width of $D$ is at most $p$ and this is best possible.
\end{theorem}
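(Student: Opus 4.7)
By Theorem~\ref{ThmCopsVSDag}, the claim that the DAG-width of $D$ is at most $p$ reduces to exhibiting a cop-monotone winning strategy with $p$ cops on $D$. Since the DAG-width of a digraph equals the maximum DAG-width over its strongly connected components, I may assume that $D$ is strongly connected, and induct on $|V(D)|$. The base case $|V(D)| \leq p$ is trivial: place a cop on every vertex. For the inductive step with $|V(D)|>p$, my opening move is to select a \emph{maximal} cycle $C$ of $D$ --- one whose vertex set is not properly contained in the vertex set of any other cycle --- and to place a cop on every vertex of $V(C)$. Since the circumference of $D$ is at most $p$, we have $|V(C)|\leq p$, so this uses at most $p$ cops, and the robber is confined to some strongly connected component $T$ of $D-V(C)$.

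The key structural ingredient is the following splicing lemma: for every strongly connected component $T$ of $D-V(C)$ and every consecutive pair $u_i, u_{i+1}$ on $C$, one cannot simultaneously have $u_i\to t\in A(D)$ and $t'\to u_{i+1}\in A(D)$ with $t,t'\in T$. For, were this the case, strong connectivity of $T$ would supply a $(t,t')$-path $P$ inside $T$, and the closed walk $u_i\cdot P\cdot u_{i+1}u_{i+2}\cdots u_{i-1}u_i$ would be a cycle of $D$ whose vertex set strictly contains $V(C)$, contradicting maximality. Writing $A := N^+(T)\cap V(C)$ (the ``exit'' vertices of $T$) and $B := N^-(T)\cap V(C)$ (the ``entry'' vertices), this gives $\{u_{i+1}:u_i\in B\}\cap A = \emptyset$, and in particular $|A|+|B|\leq|V(C)|\leq p$.

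I then recurse inside $T$. Since the only arcs from $T$ into $V(C)$ land in $A$, removing the cops on $V(C)\setminus A$ does not enlarge the robber's reachable region; robber-monotonicity is thus preserved, and by Lemma~\ref{lemmaCopRobber} the strategy remains cop-monotone. The freed cops are deployed inside the strongly connected subdigraph $D[V(T)]$, which has strictly fewer vertices than $D$ and circumference at most $p$, so the inductive hypothesis supplies a winning substrategy. The main technical obstacle is the cop-budget accounting: the $|A|$ guard cops still sitting on $V(C)$, together with the cops used inside $T$, must never exceed $p$ simultaneously. I plan to close this gap by strengthening the inductive statement so that it permits pre-placed ``boundary'' cops on the active maximal cycle, and by re-applying the splicing argument: any cycle in $T$ of length exceeding $p-|A|$, combined with an entry arc from $B$, an exit arc to $A$, and a corresponding segment of $C$, would produce a cycle in $D$ of length greater than $p$, contradicting the circumference bound --- so the recursive strategy actually needs at most $p-|A|$ cops inside $T$.

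Tightness is witnessed by the complete biorientation $\overleftrightarrow{K_p}$ of $K_p$: its circumference is exactly $p$ (a Hamiltonian directed cycle), and its DAG-width is $p$. Indeed, any cop-monotone strategy using fewer than $p$ cops must, at some stage, move a cop from some vertex $u$ to a new vertex $v$; since $\overleftrightarrow{K_p}$ is complete, the robber can immediately relocate to $u$, and cop-monotonicity makes $u$ permanently off-limits to cops, so the robber is never caught.
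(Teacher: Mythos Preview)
Your splicing lemma and the bound $|A|+|B|\leq |V(C)|$ are correct and pleasant, but the step you yourself flag as the main obstacle is a genuine gap, and the proposed fix fails: it is \emph{not} true that every cycle in $T$ has length at most $p-|A|$. Take $p=4$, let $C=u_1u_2u_3u_4u_1$ and $T=v_1v_2v_3v_1$ be a $4$-cycle and a $3$-cycle respectively, and add the three arcs $u_1v_1$, $v_1u_3$, $v_2u_4$. Then $D$ is strong with circumference exactly~$4$ (the only cycle through $u_2$ is $C$, and the only cycle through $v_3$ is $T$, so no cycle has five or more vertices), and $C$ is maximal. Here $A=\{u_3,u_4\}$, so $|A|=2$, yet $T$ has circumference $3>p-|A|=2$; with only two free cops you cannot catch a robber on the $3$-cycle $T$, so your scheme would require $2+3=5$ cops in total. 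Your heuristic for bounding the circumference of $T$ --- splice a long $T$-cycle with an entry arc, an exit arc, and a segment of $C$ --- yields here at best the $4$-cycle $u_1v_1v_2u_4u_1$, not a cycle exceeding length~$p$. (A secondary issue: the robber's reachable set after occupying $C$ is the set of vertices reachable from his position in $D-V(C)$, which need not be a single strong component; your splicing lemma uses strong connectivity of $T$ and so would need reformulation.)

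The paper's proof sidesteps the budget problem entirely by never recursing into $T$ with a reduced cop count. Instead, at each step it constructs a new cycle $C'$ that \emph{overlaps} the current cycle $C$ along a segment $C[v_{1+z},v_1]$ and picks up at least one vertex of the robber's region $R$; the cops on the complementary segment $C[v_2,\dots,v_z]$ are then moved onto the new part of $C'$. The minimality of the gap $|C[v,v']|$ over all entry/exit path pairs guarantees that no arc from $R$ lands in $\{v_2,\dots,v_z\}$, so these moves are robber-monotone. Since the cops always occupy the vertices of a single cycle of length at most $p$, the budget is never exceeded, and $R$ shrinks strictly at each iteration. Your tightness argument via $\overleftrightarrow{K_p}$ is fine and matches the paper's.
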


\begin{proof}
Since the DAG-width of a digraph is the maximum of the DAG-widths of its strong components, it suffices to consider the case when $D$ is strong. 

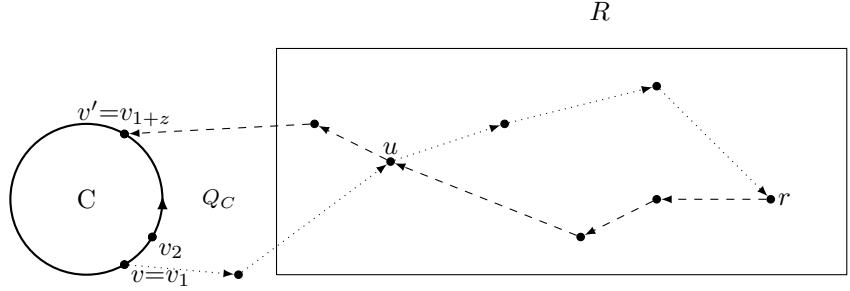
\begin{figure}
 \centering
\begin{tikzpicture}[scale=0.5]
  \node[circle] () at (0,2){C};
  \draw[font=\footnotesize](3.5,2) node {$Q_C$};	
  \draw[](13.5,7) node {$R$};	
  \tikzstyle{every node}=[fill=black, inner sep=1pt]
  
  \draw[thick] (0,2cm) circle(2);
  \foreach \x in {-60,-30,60} {
      \node[circle, draw,yshift=1cm] (v\x) at (\x:2) {};
  } 

  \node[draw,circle,label=below right:$v{=}v_1$] (v1) at (v-60){};
  \node[draw,circle,label=below right:$v_2$] (v2new) at (v-30){};
  \node[draw,circle,label=above:$v'{=}v_{1+z}$] (v2) at (v60){};
  
  \node[circle, draw, label=right:$r$] (r) at (18,2){};
  \node[circle, draw] (v11) at (4,0){};
  \node[circle, draw, label=above:$u$] (v12) at (8,3){};
  \node[circle, draw] (v111) at (11,4){};
  \node[circle, draw] (v1111) at (15,5){};
  \node[circle, draw] (v22) at (6,4){};
  \node[circle, draw] (v222) at (13,1){};
  \node[circle, draw] (v2222) at (15,2){};
  \draw[-latex,dotted] (v1) to (v11);
  \draw[-latex,dotted] (v11) to (v12);
  \draw[-latex,dotted] (v12) to (v111);
  \draw[-latex,dotted] (v111) to (v1111);
  \draw[-latex,dotted] (v1111) to (r);
  \draw[-latex,dashed] (r) to (v2222);
  \draw[-latex,dashed] (v2222) to (v222);
  \draw[-latex,dashed] (v222) to (v12);
  \draw[-latex,dashed] (v12) to (v22);
  \draw[-latex,dashed] (v22) to (v2);
  \draw[-latex,thick] (2,2) to (2,2.1);
  \draw[] (5,0) -- (5,6)--(20,6)--(20,0)--(5,0); 
   
\end{tikzpicture}
\caption{dotted arcs form  the $P_{C,r}$ path and dashed the $P_{r,C}$ path. The cycle $Q_C$ is formed by the three subpaths $P_{r,C}[u,v_{z+1}],C[V_{z+1},v_1],P_{C,r}[v_1,u]$.}\label{Qc}
\end{figure}

Below we describe a robber-monotone winning strategy for  $p$ cops. We start by having no cops in the digraph. Now pick an arbitrary maximal cycle $C$ of $D$ and put $|C|$ cops on the vertices of $C$. If the robber is not caught already, he will
have moved to a vertex $r$ of $V-V(C)$. Let us denote by $R\subseteq V-V(C)$ the set of vertices the robber can reach without running into a cop, that is the set of vertices reachable from $r$ in $D-V(C)$.
 As $D$ is strong,  there must be a $(V(C),r)$-path $P_{C,r}$ starting in some 
 vertex $v$ of  $C$ and ending in  $r$. Similarly, there must be an $(r,V(C))$-path $P_{r,C}$ from $r$ to some vertex $v'$ of $C$. 
Their concatenation
$P_{C,r}P_{r,C}$ will be a (possibly closed) $(v,v')$-trail  in $D$.
 Now let $u$ be the first vertex on the trail with $u\in V(P_{C,r})\cap V(P_{r,C})-V(C)$ (such a vertex exists as $r$ is on both paths). Then we let $Q_C=P_{C,r}[v,u]P_{r,C}[u,v']$ be the path (a cycle when $v=v'$) in $D$ formed by following $P_{C,r}$ to $u$ and then proceeding from $u$ to $v'$ on $P_{r,C}$. Note that $u\in R$ as $u$ can be reached by $r$ in $V-V(C)$. See Figure \ref{Qc}.

Let us first consider the case where there exists a pair of  paths $P_{C,r}$, $P_{r,C}$ such that $Q_C$ is a path (that is, $v\neq v'$). Pick the paths $P_{C,r},P_{r,C}$ such that $|V(C[v,v'])|>0$ is as small as possible and let $v_1,v_2,\ldots{},v_{|C|}$ be an enumeration of the  vertices in $C$ 
such that and $v_1=v$ and $v_iv_{(i+1)\mod|C|}$ is an arc in $C$ for $i\in [|V(C)|]$. Let $z$ be the integer such that $v'=v_{1+z}$ and note that, by our choice of paths, $v_{1+z}$ is the first vertex on the cycle $C$ after $v_1$ that is the end vertex of some  $P_{r,C}$ path. 
Now $C'=C[v_{1+z},v_1]Q_C$ is a cycle, containing at least one vertex from $R$ (namely $u$).
Observe that $z\neq 1$, since otherwise the cycle $C'$ would be longer than $C$, contradicting the maximality of $C$.   Move 
(some of) the cops currently on the vertices of $C[v_{2},v_{z}]$ to the currently unoccupied vertices of $C'$. Further if there are still unoccupied vertices on $C'$ (when $|V(C[v_{2},v_{z}])| < |V(Q_C)|$), then place cops currently in helicopters on these. Conversely, if there are more available cops from 
$C[v_{2},v_{z}]$  than needed to cover the $C'$ vertices 
(when $|V(C[v_{2},v_{z}])| > |V(Q_C)|$), then place these in helicopters for later use. Notice that we will never run out of cops, since all our placed cops are on the cycle $C'$, and $|C'|\leq p$. 
Now extend $C'$ to a maximal cycle $C^*$ containing all the vertices of $C'$. After covering possible new vertices in $C^*$, by cops from helicopters, the strategy can be repeated starting with cops on $V(C^*)$ and we will have $R^*\subsetneq R$, where $R^*$ is the set of vertices reachable by the robber after removing $V(C^*)$.


It remains to consider the case where the paths $P_{C,r}$ and $P_{r,C}$  are incident to the same vertex $v$ in $C$ for all choices of such  paths, and hence every $Q_C$  a cycle. In this case, every cop except the one in $v$ is free to be lifted as every path from $R$ to $C$ enters $C$ in $v$. Take an arbitrary $Q_C$ cycle, and as in the case above, find a maximal  cycle $C^{**}$ containing the vertices of $Q_C$ and occupy each of these vertices with a cop. Again since the length $C^{**}$ is at most $p$, we have enough cops. Now we can repeat the strategy above starting from the new maximal occupied cycle.

The  strategy described above is indeed a robber-monotone strategy: When moving the cops from $C$ to $C'$, we only move cops that cannot be reached by the robber without the robber running through another cop occupied vertex, hence we do not open up a new part of the graph for the robber, and hence $R'\subseteq R$ will always hold. Furthermore, for each new maximal cycle we occupy in the strategy, at least one of these vertices belongs to the current set $R$. Hence in at most $|V|$ steps $R=\emptyset$ and the robber is caught so the strategy above is a winning one.

To see that $p$ cops may be necessary to catch a robber in a digraph with circumference $p$ consider the complete digraph on $p$ vertices, that is, there is an arc on both directions between any pair of distinct vertices. If we only have $p-1$ cops here there will always be a free vertex and the robber (who moves infinitely fast) can move directly to that vertex from his current position as soon as the cops have announced their new positions. 

\end{proof}

\begin{corollary}
For every natural number $p$ there exists an algorithm ${\cal A}_p$ which given a digraph of circumference $p$ outputs a DAG-decomposition of width $p$ in polynomial time.
\end{corollary}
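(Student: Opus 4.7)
The plan is to observe that the proof of Theorem~\ref{main} is already constructive, and then to invoke the standard correspondence between cop-monotone winning strategies and DAG-decompositions encoded in Theorem~\ref{ThmCopsVSDag}. Each local operation in the strategy runs in polynomial time: finding a directed cycle through a prescribed vertex is polynomial via reachability; extending a cycle $C$ to a maximal one is polynomial because the circumference is at most $p$, so one only enumerates candidate extensions of length at most $p+1$, of which there are $n^{O(p)}$; computing the robber-reachable set $R$ in $D - V(C)$ is linear; and finding $P_{C,r}$, $P_{r,C}$, the intersection vertex $u$, and the new cycle $C'$ reduces to breadth-first search.

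Having confirmed polynomial-time executability of each move, I would build the DAG-decomposition $(H, \mathcal{X})$ in tandem with a simulated play. I start with a root $h_0$ of empty bag, and at every reached position --- a pair consisting of the current cop set $X$ and a strong component $R$ of the remaining graph in which the robber may be located --- I create a node $h$ with $X_h = X$. After the cops update to occupy a new maximal cycle $C^{*}$ as described in the proof of Theorem~\ref{main}, I branch by creating one child of $h$ per strong component of the subgraph induced on the still-reachable vertices, and recurse. Since $R$ strictly shrinks along every recursion path, the depth is at most $n$; the branching at each node is bounded by the number of strong components of a subgraph, hence at most $n$. Consequently $|V(H)| \le n^{O(1)}$ and the total work is $n^{O(p)}$.

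Conditions (D1) and (D2) are inherited from the nested structure of the recursive regions, while condition (D3) expresses precisely the guarding property: the intersection $X_h \cap X_{h'}$ along an arc of $H$ is exactly the set of cops that remain in place as the game passes from the parent position to the child position, and the robber-monotone property of the strategy ensures that every arc leaving $X_{\ge h'} \setminus X_h$ meets one of these vertices. The width of the output decomposition is at most $p$ because no position of the strategy uses more than $p$ cops, by Theorem~\ref{main}.

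The main obstacle is making the translation from a cop-monotone winning strategy to a DAG-decomposition completely rigorous, in particular verifying (D3) from the monotonicity of the strategy and matching the labels of $H$ with the sequence of cop positions. Fortunately, this translation is exactly the \emph{if} direction of Theorem~\ref{ThmCopsVSDag}, so I would appeal to its proof (together with Lemma~\ref{lemmaCopRobber}, which upgrades robber-monotonicity to joint cop- and robber-monotonicity) rather than redo the verification here; the only new content needed is the polynomial-time bookkeeping, which is handled by the complexity estimates above.
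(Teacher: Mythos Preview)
Your proposal takes essentially the same route as the paper: observe that the robber-monotone strategy of Theorem~\ref{main} is constructive, verify that each step (in particular extending to a maximal cycle, which only requires enumerating $n^{O(p)}$ candidate cycles of length at most $p$) runs in polynomial time, and then invoke the standard translation from monotone winning strategies to DAG-decompositions. The paper simply cites \cite{obdrzalekSODA2006} for that translation; you unpack it via Theorem~\ref{ThmCopsVSDag} together with Lemma~\ref{lemmaCopRobber}, which is equivalent.

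One small gap to fix: the inference ``depth at most $n$ and branching at most $n$, hence $|V(H)|\le n^{O(1)}$'' is not valid on its own --- those two bounds by themselves only yield $n^{n}$. The polynomial bound does hold, but for a different reason: the sibling robber-regions you branch into are pairwise disjoint subsets of their parent region, so the family of regions is laminar on $V(D)$ and therefore has size $O(n)$. Since in the strategy of Theorem~\ref{main} each region update strictly shrinks $R$, only boundedly many cop positions are associated with each region, and the total node count is polynomial. Alternatively, simply defer the size bound to the translation in \cite{obdrzalekSODA2006}, as the paper does.
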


\begin{proof}
This follows from the fact that we can translate a robber monotone strategy for $p$ cops into a DAG-decomposition of width at most $k$ \cite{obdrzalekSODA2006}. See an example in the appendix at the end of the paper.
Note that the steps in the proofs above, such as finding a maximal  cycle containing a given set $X$ of vertices, are all polynomial since the length of the cycle sought is at most $p$ and hence we may check all possible cycles covering $X$ in polynomial time.\\
\end{proof}

Combining Theorem \ref{ThmDirTree} and Theorem  \ref{main} we obtain the following.

\begin{corollary}
\label{shortcyclesmalldtw}
Every digraph of circumference  most $l$ has directed tree-width at most $3l+1$. 
\end{corollary}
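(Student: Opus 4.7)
The plan is simply to chain the two bounds already available. First I would invoke Theorem~\ref{main}: any digraph $D$ of circumference at most $l$ admits a DAG-decomposition of width at most $l$, so $\mathrm{DAG\text{-}width}(D)\leq l$. Then I would feed this into Theorem~\ref{ThmDirTree}, which (quoted from \cite{berwangerJCT102}) says that DAG-width $k$ implies directed tree-width at most $3k+1$. Instantiating $k=l$ gives directed tree-width at most $3l+1$, which is exactly the assertion of the corollary.

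There is no real obstacle here, since all of the combinatorial work sits inside Theorem~\ref{main}; the corollary is a pure composition of a bound we proved with a bound we cited. The only thing to check is the arithmetic substitution, and that the hypotheses of Theorem~\ref{ThmDirTree} match (it is stated for arbitrary digraphs, with no extra assumptions on $D$, so applying it to our $D$ is immediate). Accordingly, I expect the written proof to be a single sentence invoking the two results.
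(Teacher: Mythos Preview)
Your proposal is correct and matches the paper's own proof exactly: the paper simply states that the corollary follows by combining Theorem~\ref{ThmDirTree} and Theorem~\ref{main}, which is precisely the composition you describe.
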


\begin{corollary}\cite{birmeleJGT43}.
If an undirected graph $G$ has circumference at most $k$, then its tree-width is at most $k-1$.
\end{corollary}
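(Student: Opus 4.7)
The plan is to reduce the undirected statement to Theorem~\ref{main} via the \emph{biorientation} $D = \vec{G}$ obtained by replacing every edge $uv$ of $G$ by the pair of arcs $uv, vu$. The first step is to relate the two circumferences: any directed cycle in $D$ of length $\ell \geq 3$ visits $\ell$ distinct vertices whose consecutive pairs are joined by edges of $G$, so it is a cycle of $G$ of length $\ell$; conversely any cycle of $G$ lifts to a directed cycle of $D$ in either direction of traversal. The only additional directed cycles in $D$ are the $2$-cycles coming from single edges. Hence, provided $G$ contains a cycle of length at least $3$ (otherwise $G$ is a forest and the conclusion $\mathrm{tw}(G) \leq 1 \leq k-1$ is immediate for $k \geq 2$), the circumference of $D$ equals the circumference of $G$, namely at most $k$.

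By Theorem~\ref{main}, $D$ admits a DAG-decomposition $(H, (X_h)_{h \in V(H)})$ of width at most $k$. The heart of the argument is to extract from this decomposition a tree-decomposition of $G$ of the same bag-size bound. After appealing to the standard reduction ensuring a unique root, I would build a spanning in-arborescence $T$ of $H$ by selecting, for each non-root $h$, a unique in-neighbour. I would then claim that $(T, (X_h)_{h \in V(T)})$ is a tree-decomposition of $G$. The vertex-coverage axiom is inherited directly from D1. The connectedness axiom (that $\{h : v \in X_h\}$ induces a subtree of $T$) follows from D2 applied along the unique $T$-path between any two nodes whose bags contain $v$. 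For the edge-coverage axiom, I would exploit that each edge $uv$ of $G$ gives a directed $2$-cycle $u \to v \to u$ in $D$: the guarding condition D3 prevents any $T$-arc from separating one endpoint into $X_{\geq h'} \setminus X_h$ while leaving the other outside, forcing $u$ and $v$ into a common bag.

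The main obstacle I anticipate is the edge-coverage verification: the guarding axiom is a one-directional statement about arcs leaving a subset, and the clean way to force both endpoints of a $2$-cycle into the same bag uses the symmetry of the biorientation essentially. Once this is established, the tree-decomposition has width at most $k$, and hence $\mathrm{tw}(G) \leq k-1$. A possible alternative route, should the bag-count conversion prove too lossy, is to mirror the cops-and-robber strategy of Theorem~\ref{main} directly in $G$: start with a maximal cycle $C$ (of length at most $k$), occupy it, and use Menger's theorem in the $2$-connected reduction of $G$ to find two disjoint paths from the robber's component back to $C$, yielding a new cycle $C'$ of length at most $k$ into which the cops can be shifted monotonically. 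Combining this with the undirected cops-and-robber characterisation of tree-width would again yield the bound $k-1$.
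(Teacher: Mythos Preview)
Your overall plan---pass to the biorientation $\stackrel{\leftrightarrow}{G}$, bound its circumference by $k$, and apply Theorem~\ref{main}---is exactly what the paper does. The divergence is in the final step: the paper simply invokes the known identity $\mathrm{tw}(G)=\text{DAG-width}(\stackrel{\leftrightarrow}{G})-1$ from \cite{berwangerJCT102}, whereas you attempt to manufacture a tree-decomposition of $G$ directly from a DAG-decomposition $(H,(X_h))$ by passing to a spanning out-arborescence $T$ of $H$.

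That conversion has a genuine gap in the connectedness step. Axiom~D2 only controls bags along \emph{directed} chains $h\le_H h'\le_H h''$ of $H$. If $v\in X_{h_1}\cap X_{h_2}$ with $h_1,h_2$ incomparable in $T$, the $T$-path between them climbs to their least common ancestor $a$ and then descends; D2 would let you push $v$ into every bag along each leg \emph{only once you know} $v\in X_a$, and nothing in D1--D3 forces that. A small example already breaks it: take $H$ with root $r$ and arcs $r\to a$, $r\to b$, $a\to c$, $b\to c$, bags $X_r=\emptyset$, $X_a=X_b=X_c=\{v\}$; this is a legal DAG-decomposition of a one-vertex digraph, but in the arborescence $T=\{r\to a,\,r\to b,\,a\to c\}$ the set $\{h:v\in X_h\}=\{a,b,c\}$ is disconnected. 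So picking an arbitrary arborescence of $H$ does not yield a tree-decomposition, and the symmetry of $\stackrel{\leftrightarrow}{G}$ enters only through D3, which you never use in the connectedness argument.

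The fix is precisely your ``alternative route'': argue through the game. The directed cops-and-robber game on $\stackrel{\leftrightarrow}{G}$ is literally the undirected game on $G$ (the robber can traverse each edge in either direction), so a monotone strategy for $k$ cops on $\stackrel{\leftrightarrow}{G}$ given by Theorem~\ref{main} is a monotone strategy for $k$ cops on $G$, and the Seymour--Thomas characterisation then gives $\mathrm{tw}(G)\le k-1$. That is the content of the citation the paper uses; promote it from fallback to main argument.
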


\begin{proof}
Given $G$ we form the digraph $\stackrel{\leftrightarrow}{G}$ by replacing each edge of $G$ by a 2-cycle. Then the tree-width of $G$ equals the DAG-width of
 $\stackrel{\leftrightarrow}{G}$ minus one \cite{berwangerJCT102} and now the claim follows from Theorem \ref{main}.
\end{proof}

\section{Linkings in digraphs with bounded circumference}

The following direct consequence of Corollary \ref{CorBoundDag} and Theorem \ref{main} answers a question in \cite{bangTCS562} in the affirmative.

\begin{theorem}
\label{klinkboundedcirc}
For every choice of natural numbers $k,l$ there exists a polynomial algorithm for the $k$-linkage problem on digraphs with circumference at most $l$.\qed
\end{theorem}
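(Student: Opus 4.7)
The plan is simply to chain the three preceding results, so there is no real obstacle; the proof is a one-line deduction, and I will just unpack it carefully. Given an instance $(D, s_1,\ldots,s_k, t_1,\ldots,t_k)$ of the $k$-linkage problem with $D$ of circumference at most $l$, I would first invoke Theorem \ref{main} to conclude that $D$ has DAG-width at most $l$. Since $l$ is a fixed constant, $D$ lies in the class of digraphs of bounded DAG-width, and Corollary \ref{CorBoundDag} then supplies the desired polynomial-time algorithm for the $k$-linkage problem.

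The only subtle point worth remarking on is the constructive side. The polynomial algorithm promised by Corollary \ref{CorBoundDag} typically needs an actual DAG-decomposition of bounded width as input, rather than just the promise that one exists. That input is provided by the Corollary immediately following Theorem \ref{main}, which explicitly builds a DAG-decomposition of width $l$ in polynomial time from any digraph of circumference $l$. Thus the complete pipeline is: (i) compute such a DAG-decomposition of width $\le l$ in polynomial time, and then (ii) run the bounded DAG-width linkage algorithm on $D$ using this decomposition and the specified terminals. Both phases have polynomial running time for every fixed $k$ and $l$, so their composition is polynomial, proving the theorem. No further argument is required, which is why the statement is given with \qed.
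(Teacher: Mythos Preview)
Your proof is correct and matches the paper's approach exactly: the theorem is stated as a direct consequence of Theorem~\ref{main} and Corollary~\ref{CorBoundDag}, which is precisely the chain you invoke. Your additional remark about the constructive availability of the DAG-decomposition is a helpful elaboration but not something the paper spells out.
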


The case $l=2$ was proved previously in 
\cite{havetsub}. That  proof  uses both the polynomial algorithm for $k$-linkage in undirected graphs from 
\cite{robertsonJCT63} and an algorithm similar to that used in the algorithm for acyclic digraphs \cite{fortuneTCS10} to obtain an algorithm of running time roughly $O(n^{2k})$ where $n$ is the number of vertices in the digraph.

The reduction below to the $k$-linkage problem for acyclic digraphs leads to a faster algorithm
 since the complexity of the overall algorithm will be the same as the complexity of solving $k$-linking in DAGs which is $O(k!n^{k+2})$ (see e.g. \cite[Section 10.4]{bang2009}).

\begin{theorem} \label{Theorem2CycleVertex}
 Let $D$ be a directed graph of circumference 2 and let $k$ be a fixed integer. Then the $k$-linkage problem in $D$ can be reduced to a $k$-linkage problem in an acyclic graph in linear time.
\end{theorem}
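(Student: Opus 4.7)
The reduction replaces each nontrivial strongly connected component (SCC) of $D$ by a DAG ``doubled-tree'' gadget. Since $D$ has circumference $2$, every arc inside a nontrivial SCC must belong to a digon (otherwise the arc together with a path back through the SCC gives a cycle of length $\ge 3$), and the underlying undirected graph of any SCC is a tree (else a short directed cycle arises). Hence each nontrivial SCC is a bidirected tree.

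I would root each such SCC $T$ arbitrarily at $r_T$, split each $v\in V(T)$ into two vertices $v^u,v^d$, and include arcs $c^u\to p^u$ and $p^d\to c^d$ for every parent--child pair $(p,c)$, plus a turn-around arc $v^u\to v^d$ for each $v$. Each arc of $D$ between different SCCs is rewired as $\mathrm{out}(x)\to\mathrm{in}(y)$, where $\mathrm{out}(v)=v^d$ and $\mathrm{in}(v)=v^u$ for $v$ in a nontrivial SCC and $\mathrm{out}(v)=\mathrm{in}(v)=v$ otherwise. A source terminal $s_i=v$ maps to $\mathrm{in}(v)$ and a sink terminal $t_i=v$ to $\mathrm{out}(v)$. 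Call the result $D'$; it is acyclic by the topological order that processes SCCs according to the condensation of $D$ and, within each gadget, lists the $u$-copies by decreasing tree depth followed by the $d$-copies by increasing tree depth. The forward direction is routine: a $k$-linkage in $D$ lifts by sending any sub-path $v\to\cdots\to\ell\to\cdots\to v'$ through a tree (with $\ell$ the LCA) to $v^u\to\cdots\to\ell^u\to\ell^d\to\cdots\to v'^d$.

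The main obstacle is the converse. First, from any $k$-linkage in $D'$ one may greedily shortcut every sub-path of the form $v^u\to\cdots\to v^d$ to the single arc $v^u\to v^d$; this only releases intermediate gadget vertices and so preserves vertex-disjointness. After shortcutting, each gadget path visits the copies $v^u,v^d$ of any original $v$ in at most one consecutive block, so its projection $v^u,v^d\mapsto v$ is a simple path in $D$. Second, one must show that the projected paths are vertex-disjoint in $D$. Suppose for contradiction two projected paths $P_1,P_2$ share some $v\in V(T)$ but are disjoint in $D'$; then WLOG $P_1$ uses only $v^u$ (entering via $c_1^u\to v^u$, exiting to $p(v)^u$) and $P_2$ only $v^d$ (entering via $p(v)^d\to v^d$, exiting to $c_2^d$). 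Both visit $p(v)$, and the same disjointness forces $P_1$ to use only $p(v)^u$ and $P_2$ only $p(v)^d$. Iterating walks both paths up the ancestor chain of $v$, each on its own copy. Now $P_1$ must eventually terminate at some ancestor $u$, and a case analysis (sink terminal forcing $u^u\to u^d$; external exit $u^d\to w$; or $u=r_T$, from which the only outgoing arc is $r_T^u\to r_T^d$) shows that $P_1$ uses $u^d$ at $u$. A symmetric analysis forces $P_2$ to use $x^u$ at some ancestor $x$ of $v$ (as its source terminal, external entry point, or turn-around at the root). Because both paths traverse a common segment of the ancestor chain, the forced copies $u^d,x^u$ coincide with copies used by the other path, contradicting $D'$-disjointness.

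The construction is linear in $|V(D)|+|A(D)|$: SCC decomposition and tree rooting are linear, each vertex is split at most once, and each original arc contributes $O(1)$ gadget arcs.
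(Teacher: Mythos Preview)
Your argument is correct, but it takes a genuinely different route from the paper. The paper first normalises so that no terminal lies in a nontrivial strong component (by setting $d^-(s_i)=d^+(t_j)=0$), and then iteratively peels off one leaf $v$ of a bidirected tree at a time: $v$ is replaced by $v_1,v_2$ with arcs $v_1u,\,uv_2,\,v_1v_2$, and the equivalence at each step is essentially a one-line observation (at most one linkage path can meet $\{v_1,v_2\}$, because the only way two paths could split them forces both through $u$). After at most $|V|-2k$ such local splits one reaches an acyclic instance.

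Your construction performs, in effect, all of these leaf splits simultaneously by rooting each tree and replacing it by the ``up/down'' DAG; this lets you keep terminals inside nontrivial components without any preprocessing. The price is that the backward direction is no longer a one-liner: you need the shortcutting step to force each gadget sub-path to turn exactly at the LCA, and then the ancestor-chain argument to derive a $D'$-collision from a projected collision. That argument is sound (whichever of the two turning points $\ell_1,\ell_2$ is deeper lies on the other path's $u$- or $d$-chain), though your exposition could be tightened: the phrases ``entering via $c_1^u\to v^u$'' and ``exiting to $c_2^d$'' are not literally true at the entry/exit vertex of a gadget segment, but the argument only uses the exits to $p(v)^u$ and the entries from $p(v)^d$, which do hold whenever $v\ne\ell_i$. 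In short, the paper trades a trivial step-equivalence for an iteration, while you trade a single global gadget for a more delicate disjointness proof; both yield a linear-time reduction.
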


\begin{proof}

First observe that since we are looking for disjoint paths, we may assume that $d^-(s_i)=d^+(t_j)=0$ for $i,j\in [k]$. Also note that each 
strongly connected
component of $D$ will correspond to a tree $T$ in the underlying undirected graph where each edge $uv$ of $T$ is replaced by a 2-cycle on $u,v$.

We will transform an instance $[D,s_1,\ldots{},s_k,t_1,\ldots{},t_k]$ to an equivalent instance 
$[D',s_1,\ldots{},s_k,t_1,\ldots{},t_k]$ where $D'$ is an acyclic digraph by removing one 2-cycle at a time, until there are no cycles left in the graph. The terminals stay the same throughout the reduction. Let $S$ be an arbitrary non-trivial strong component of $D$ (if none exists, $D$ is acyclic and we are done). By the assumption 
above, $S$ does not contain any of the vertices $s_1,\ldots{},s_k,t_1,\ldots{},t_k$ and hence none of the paths in a $k$-linkage can start or end in $S$.

\begin{figure}[H]
\centering
 \begin{minipage}{6 cm}
    \begin{tikzpicture}[scale=0.5]
    \tikzstyle{every node}=[font=\footnotesize]
    \node[circle, draw] (v) at (5,5){$v$};
    \draw (5,16) -- (0,4) --(10,4)--(5,16);
    \node[font=\large] at (5,12){$S$};
    \node[circle,draw] (u) at (5,10){$u$};
    \node[circle] (a1) at (0,2){\phantom{u}};
    \node[circle] (a2) at (0,15){\phantom{u}};
    \node[circle] (a3) at (10,2){\phantom{u}};
    \node[circle] (a4) at (10,15){\phantom{u}};
    \node[circle] (b1) at (4.5,12){};
    \node[circle] (b2) at (5.5,12){};
    \draw[-latex] (a1)--(v);
    \draw[-latex] (a2) --(b1);
    \draw[-latex] (v) --(a3);
    \draw[-latex] (b2) --(a4);
    \draw[-latex] (v) to [in=240, out=120] (u);
    \draw[-latex] (u) to [in=50, out=310] (v);
    \end{tikzpicture}
    \caption*{}
 \end{minipage}
  \begin{minipage}{6 cm}
  \begin{tikzpicture}[scale=0.5]
    \tikzstyle{every node}=[font=\footnotesize]
    \node[circle, draw] (v1) at (18,5){$v_1$};
    \node[circle, draw] (v2) at (22,5){$v_2$};
    \draw (20,16)--(17,8)--(23,8)--(20,16);
    \node[font=\large] at (20,11) {$S-v$};
    \node[circle,draw] (u) at (20,10){$u$};
    \node[circle] (a1) at (15,2){\phantom{u}};
    \node[circle] (a2) at (15,15){\phantom{u}};
    \node[circle] (a3) at (25,2){\phantom{u}};
    \node[circle] (a4) at (25,15){\phantom{u}};
    \node[circle] (b1) at (19.5,12){};
    \node[circle] (b2) at (20.5,12){};
    \draw[-latex] (a1)--(v1);
    \draw[-latex] (a2) --(b1);
    \draw[-latex] (v2) --(a3);
    \draw[-latex] (b2) --(a4);
    \draw[-latex] (v1) to (u);
    \draw[-latex] (u) to (v2);
    \draw[-latex] (v1) to (v2);
  \end{tikzpicture}
  \caption*{ }
\end{minipage}
    \caption{$S$ is the strong component in $D$ containing the $2$-cycle $uv$}\label{2linkred}
\end{figure}

By the description of the strong components above, $S$ contains a 2-cycle $uvu$ such that $v$ has in- and out-degree one in $S$. Let $D^*$ be obtained from $D$ by deleting $v$, adding  two new vertices $v_1,v_2$, adding an arc $wv_1$   for each arc $wv$ with $w\neq u$ in $D$, adding and arc $v_2h$ for each arc $vh$ with $h\neq u$ in $D$
 and finally adding the arcs $v_1u,uv_2,v_1v_2$, see Figure \ref{2linkred}. It is easy to check that $D^*$ has circumference at most 2 and that $D^*$ has the desired paths if and only if $D$ does: the vertex $v$ can be part of at most one path in a solution in $D$ and at most one path in a solution for $D^*$ can intersect the set $\{v_1,v_2\}$.

Now it is clear that by repeatedly processing  one leaf vertex at a time from a non-trivial strong component, as long as one exists, 
we obtain an equivalent acyclic instance after at most $|V(D)|-2k$ steps. 
\end{proof}

As mentioned in the introduction, the weak $k$-linkage problem is the arc-disjoint version of the $k$-linkage problem, where we ask for arc-disjoint rather than vertex-disjoint paths $P_1,\ldots{},P_k$ such that $P_i$ is an $(s_i,t_i)$-path for $i\in [k]$. Note that now the digraphs may have parallel arcs and a vertex from 
$\{s_i,t_i\}$ may be a vertex of one or more paths $P_j$ with $j\neq i$. Let us denote by $\mu_D(u,v)$ the number of arcs from $u$ to $v$ in the digraph $D$.

For acyclic digraphs the weak $k$-linkage problem is polynomial for every fixed $k$ \cite{fortuneTCS10}.
For digraphs for which all closed trails have length at most $p$ (e.g. when all strong components have size at most $p$) we can obtain a polynomial algorithm.

\begin{theorem}
\label{boundedwalk}
For every natural number $p$ the weak $k$-linkage problem is polynomial for digraphs containing no closed trail of length more than $p$. 
\end{theorem}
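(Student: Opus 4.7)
The plan is to reduce the weak $k$-linkage problem on $D$ to the ordinary (vertex-disjoint) $k$-linkage problem on the line digraph of a slight augmentation $D'$ of $D$, and then invoke Theorem \ref{main} and Corollary \ref{CorBoundDag}. Recall that in the line digraph $L(H)$ of a digraph $H$, the vertices are the arcs of $H$ and $(a,b)$ is an arc of $L(H)$ whenever the head of $a$ equals the tail of $b$; directed cycles in $L(H)$ correspond bijectively to closed trails in $H$, so the circumference of $L(H)$ equals the maximum length of a closed trail of $H$.

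First I would attach, for each $i\in[k]$, two fresh vertices $s'_i,t'_i$ to $D$ together with the arcs $s'_is_i$ and $t_it'_i$. Calling the resulting digraph $D'$, the vertex $s'_i$ (resp.\ $t'_i$) has in-degree (resp.\ out-degree) zero in $D'$, so it cannot lie on any closed trail; hence every closed trail of $D'$ is already a closed trail of $D$ and, by hypothesis, has length at most $p$. Therefore $L(D')$ has circumference at most $p$, and by Theorem \ref{main} its DAG-width is at most $p$. Since $|V(L(D'))|=|A(D')|=|A(D)|+2k$ is polynomial in the size of the input, Corollary \ref{CorBoundDag} yields a polynomial algorithm for the ordinary $k$-linkage problem on $L(D')$.

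Setting $\alpha_i$ to be the vertex of $L(D')$ corresponding to $s'_is_i$ and $\beta_i$ the vertex corresponding to $t_it'_i$ (these $2k$ vertices are pairwise distinct because the $s'_i,t'_i$ are fresh), I would then establish that $D$ admits a weak $k$-linkage from $(s_1,\ldots,s_k)$ to $(t_1,\ldots,t_k)$ if and only if $L(D')$ admits a $k$-linkage from $(\alpha_1,\ldots,\alpha_k)$ to $(\beta_1,\ldots,\beta_k)$. Given arc-disjoint $(s_i,t_i)$-paths $P_i$ in $D$, prepending $s'_is_i$ and appending $t_it'_i$ yields paths in $D'$ whose arc sequences form vertex-disjoint $(\alpha_i,\beta_i)$-paths in $L(D')$. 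Conversely, vertex-disjoint $(\alpha_i,\beta_i)$-paths in $L(D')$ translate into pairwise arc-disjoint $(s_i,t_i)$-trails in $D$, from each of which a genuine $(s_i,t_i)$-path can be extracted by erasing closed subtrails; these paths remain arc-disjoint because each lies within the arc set of its trail. With this equivalence in hand one simply runs the algorithm of Corollary \ref{CorBoundDag} on $L(D')$ and translates the output back to a weak $k$-linkage of $D$. The one point needing care is the final trail-to-path conversion, but this is a standard and elementary operation, so I do not anticipate any real obstacle.
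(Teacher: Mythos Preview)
Your proposal is correct and follows essentially the same route as the paper: augment $D$ with fresh source/sink vertices, pass to the line digraph $L(D')$, observe that closed trails in $D'$ correspond to directed cycles in $L(D')$ so the circumference of $L(D')$ is at most $p$, and then invoke the polynomial $k$-linkage algorithm for digraphs of bounded circumference. The only cosmetic difference is that the paper cites Theorem~\ref{klinkboundedcirc} directly while you unpack it into Theorem~\ref{main} plus Corollary~\ref{CorBoundDag}, and you spell out the trail-to-path extraction in the reverse direction that the paper leaves as ``easy to see''.
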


\begin{proof}
Let $[D,s_1,\ldots{},s_k,t_1,\ldots{},t_k]$ be an instance of the weak $k$-linkage problem where $D$ has no closed trail longer than $p$. Form a new digraph $D'$ by adding $2k$ new vertices $s'_1,\ldots{},s'_k,t'_1,\ldots{},t'_k$ and the arcs $\{s'_is_i|i\in [k]\}\cup\{t_it'_i|i\in [k]\}$. Clearly $[D',s'_1,\ldots{},s'_k,t'_1,\ldots{},t'_k]$ is a yes-instance if and only if $[D,s_1,\ldots{},s_k,t_1,\ldots{},t_k]$ is a yes-instance. Now let $L(D')$ be the line digraph of $D'$, that is, $V(L(D'))=A(D')$ and $A(L(D'))=\{ab|a,b\in A(D')\mbox{ and the head of $a$ coincides with the tail of $b$}\}$. It is easy to see that  $[D',s'_1,\ldots{},s'_k,t'_1,\ldots{},t'_k]$ is a yes-instance for the weak $k$-linkage problem if and only if $[L(D'),s'_1s_1,\ldots{},s'_ks_k,t_1t'_1,\ldots{},t_kt'_k]$ is a yes-instance for the $k$-linkage problem. Since $D$ and hence also $D'$ has no closed trail of length more than $p$, the circumference of $L(D')$ is bounded by $p$ and the claim now follows from Theorem \ref{klinkboundedcirc}.
\end{proof}

Note that we cannot apply the reduction above if we replace  the assumption of bounded maximum length of a closed  trail by bounded circumference since the circumference of $L(D')$ may be arbitrarily large compared to that of $D'$ (on the other hand, if $D$ is acyclic, then so is $L(D')$ and hence (as is well known) the weak $k$-linkage problem for acyclic digraphs reduces to the $k$-linkage problem in the same class in linear time).

\begin{problem}
What is the complexity of the weak $k$-linkage problem for digraphs with bounded circumference? 
\end{problem}

When the circumference is 2 we can give a polynomial algorithm.

\begin{theorem}
\label{weaklcirc2}
The weak $k$-linkage problem is polynomially solvable for every fixed $k$ in digraphs of circumference 2.
\end{theorem}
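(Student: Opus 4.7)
The plan is to reduce the weak $k$-linkage problem on $D$ to the same problem on an acyclic digraph, which is polynomial for fixed $k$ by Fortune--Hopcroft--Wyllie \cite{fortuneTCS10}. As in the proof of Theorem~\ref{Theorem2CycleVertex}, each non-trivial strong component $S$ of $D$ is a \emph{doubled tree}: its underlying simple graph is a tree $T_S$, each edge of $T_S$ is realised by a 2-cycle of some positive multiplicity, and no other internal arcs exist (any would create a directed cycle of length at least $3$).

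I will replace every such $S$ by an acyclic gadget $G_S$ obtained as follows. Root $T_S$ at an arbitrary vertex and split every $v\in V(S)$ into two new vertices $v^{\uparrow}$ and $v^{\downarrow}$. For each tree edge $\{u,v\}$ with $u$ the parent of $v$, add $\mu_D(v,u)$ \emph{up}-arcs $v^{\uparrow}\to u^{\uparrow}$ and $\mu_D(u,v)$ \emph{down}-arcs $u^{\downarrow}\to v^{\downarrow}$; and for each $v\in V(S)$ add $k$ parallel \emph{turn-around} arcs $v^{\uparrow}\to v^{\downarrow}$. Arcs of $D$ entering $S$ from outside are redirected into the $\uparrow$-copy of their head, arcs of $D$ leaving $S$ are redirected to leave the $\downarrow$-copy of their tail, and a terminal $s_i$ (resp.\ $t_i$) sitting inside $S$ is represented by $s_i^{\uparrow}$ (resp.\ $t_i^{\downarrow}$). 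The resulting digraph $D^{\ast}$ is acyclic: each gadget is acyclic by construction (up-arcs form the tree oriented toward the root, down-arcs its reverse, and turn-around arcs point only from the $\uparrow$-layer to the $\downarrow$-layer), and the inter-gadget arcs follow the already acyclic condensation of $D$.

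The heart of the proof is to show that $D$ has a weak $k$-linkage iff $D^{\ast}$ does. For the easy direction, each $P_i$ intersects every $S$ in the unique $T_S$-path from its entry $a_i$ via $\mathrm{LCA}(a_i,b_i)$ to its exit $b_i$, and this translates directly into a $G_S$-path $a_i^{\uparrow}\to\cdots\to \mathrm{LCA}^{\uparrow}\to \mathrm{LCA}^{\downarrow}\to\cdots\to b_i^{\downarrow}$; distinct paths that turn at a common vertex simply use distinct copies of the turn-around arc, and at most $k$ copies are ever needed. For the converse, any $G_S$-path must enter through the $\uparrow$-layer, use exactly one turn-around arc at some vertex $x$, and exit through the $\downarrow$-layer, so $x$ is a common ancestor of $a$ and $b$ in $T_S$. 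The subtlety is that $x$ need not be $\mathrm{LCA}(a,b)$, and when $x$ lies strictly above it the corresponding walk in $S$ visits $\mathrm{LCA}(a,b)$ twice and is not a simple path; I handle this by \emph{pruning} the walk at the revisit to extract a simple $(a,b)$-path in $S$.

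This pruning is the main technical obstacle: I must verify that pruning each $G_S$-path independently still produces a family of pairwise arc-disjoint simple paths in $D$. This follows because, under the natural bijection between $S$-arcs and their (non-turn-around) $G_S$-images, every pruned path uses only a subset of the arcs of its $G_S$-counterpart, so arc-disjointness is preserved. Once this equivalence is established, applying the Fortune--Hopcroft--Wyllie algorithm to the acyclic digraph $D^{\ast}$ yields the desired polynomial algorithm for $D$.
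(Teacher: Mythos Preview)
Your argument is correct, and in fact the acyclic gadget you build is essentially the one the paper's reduction converges to. The paper proceeds iteratively: it repeatedly picks a leaf $v$ of a non-trivial strong component with unique tree-neighbour $u$, splits $v$ into $v_1,v_2$, sends incoming arcs to $v_1$ and outgoing arcs from $v_2$, keeps $v_1\to u$ and $u\to v_2$, and adds sufficiently many parallel $v_1\to v_2$ arcs; then it recurses on the smaller component. Carrying this out from the leaves inward yields (up to the root, which the paper leaves unsplit, and up to the multiplicity of the bypass arcs) exactly your two-layer up/down gadget under the identification $v_1=v^{\uparrow}$, $v_2=v^{\downarrow}$. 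The trade-off is that the paper's local one-step equivalence is immediate and never needs to talk about LCAs or prune non-simple walks, whereas your one-shot global construction is cleaner to state but forces the pruning argument in the reverse direction---which you handle correctly, since the pruned $(a,b)$-path in $S$ uses only a subset of the $S$-arcs imaging the $G_S$-path, so arc-disjointness survives.
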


\begin{proof}
Like for the vertex disjoint case we will reduce an instance $[D,s_1,\ldots{},s_k,t_1,\ldots{},t_k]$ to an equivalent instance 
$[D',s'_1,\ldots{},s'_k,t'_1,\ldots{},t'_k]$ where $D'$ is an acyclic digraph. As above we show how to reduce the number of cycles in $D$ successively while preserving an equivalent instance until we have an equivalent acyclic instance and then we can apply the polynomial algorithm from \cite{fortuneTCS10} for acyclic digraphs.

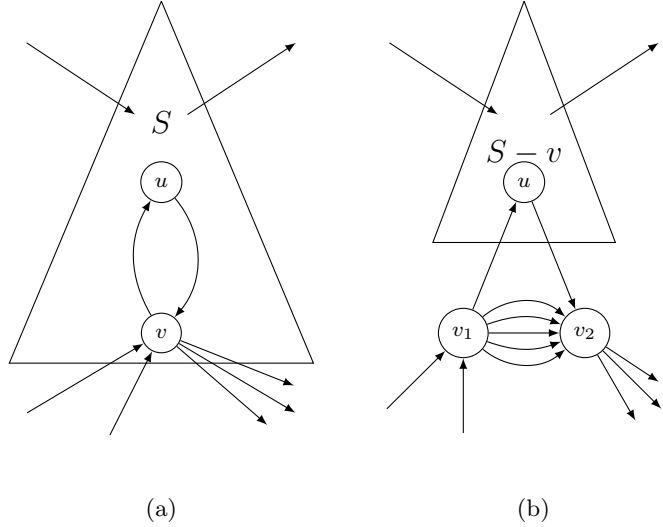
\begin{figure}[h]
\centering
  \begin{subfigure}[b]{0.3\textwidth}
   \centering
  \begin{tikzpicture}[scale=0.4]
  \tikzstyle{every node}=[font=\footnotesize]
  \node[circle, draw] (v) at (5,5){$v$};
  \draw (5,16) -- (0,4) --(10,4)--(5,16);
  \node[font=\large] at (5,12){$S$};
  \node[circle,draw] (u) at (5,10){$u$};
  \node[circle] (a1) at (0,2){\phantom{u}};
  \node[circle] (a11) at (3,1){\phantom{u}};
  \node[circle] (a111) at (1.5,1.5){\phantom{u}};
  \node[circle] (a2) at (0,15){\phantom{u}};
  \node[circle] (a3) at (10,2){\phantom{u}};
  \node[circle] (a33) at (10,3){\phantom{u}};
  \node[circle] (a333) at (8,1){\phantom{u}};
  \node[circle] (a3333) at (9,1.5){\phantom{u}};
  \node[circle] (a4) at (10,15){\phantom{u}};
  \node[circle] (b1) at (4.5,12){};
  \node[circle] (b2) at (5.5,12){};
  \draw[-latex] (a1)--(v);
  \draw[-latex] (a11)--(v);
  \draw[-latex] (a2) --(b1);
  \draw[-latex] (v) --(a3);
  \draw[-latex] (v) --(a33);
  \draw[-latex] (v) --(a3333);
  \draw[-latex] (b2) --(a4);
  \draw[-latex] (v) to [in=240, out=120] (u);
  \draw[-latex] (u) to [in=50, out=310] (v);
  \end{tikzpicture}
  \caption{}
  \end{subfigure}
  \begin{subfigure}[b]{0.3\textwidth}
\begin{tikzpicture}[scale=0.4]
  \tikzstyle{every node}=[font=\footnotesize]
  \node[circle, draw] (v1) at (3,5){$v_1$};
  \node[circle, draw] (v2) at (7,5){$v_2$};
  \draw (5,16)--(2,8)--(8,8)--(5,16);
  \node[font=\large] at (5,11) {$S-v$};
  \node[circle,draw] (u) at (5,10){$u$};
  \node[circle] (a1) at (0,2){\phantom{u}};
  \node[circle] (a11) at (3,1){\phantom{u}};
  \node[circle] (a111) at (1.5,1.5){\phantom{u}};
  \node[circle] (a2) at (0,15){\phantom{u}};
  \node[circle] (a3) at (10,2){\phantom{u}};
  \node[circle] (a33) at (10,3){\phantom{u}};
  \node[circle] (a333) at (8,1){\phantom{u}};
  \node[circle] (a3333) at (9,1.5){\phantom{u}};
  \node[circle] (a4) at (10,15){\phantom{u}};
  \node[circle] (b1) at (4.5,12){};
  \node[circle] (b2) at (5.5,12){};
  \draw[-latex] (a1)--(v1);
  \draw[-latex] (a11)--(v1);
  \draw[-latex] (a2) --(b1);
  \draw[-latex] (v2) --(a3);
  \draw[-latex] (v2) --(a33);
  \draw[-latex] (v2) --(a3333);
  \draw[-latex] (b2) --(a4);
  \draw[-latex] (v1) to (u);
  \draw[-latex] (u) to (v2);
  \draw[-latex] (v1) to (v2);
  \draw[-latex] (v1) to [in=160, out=20](v2);
  \draw[-latex] (v1) to [in=200, out=340] (v2);   
  \draw[-latex] (v1) to [in=140, out=40](v2);
  \draw[-latex] (v1) to [in=220, out=320] (v2);   
\end{tikzpicture}
\caption{}
  \end{subfigure}
\caption{The operation on the graph $D$, to split of a leaf vertex of a non-trivial strong component.}\label{weak2linkfig}
\end{figure}

Again we consider a non-trivial strong component $S$ and fix a 2-cycle $uvu$ such that $u$ is the only neighbour of $v$ in $S$ (there may be several arcs in both directions between $u$ and $v$). Let $d=d^-_{V-V(S)}(v)+d^+_{V-V(S)}(v)$, that is, the total number of arcs with one end in $v$ and the other in $V-V(S)$. 
Let $D^*$ be obtained from $D$ by replacing $v$ by two new vertices $v_1,v_2$,
 adding  $d$ arcs from $v_1$ to $v_2$ and replacing  the arcs incident with $v$ in $D$ by new arcs incident to $v_1,v_2$ as follows: for every vertex $w\not\in\{u,v\}$ add $\mu_D(w,v)$ arcs from $w$ to $v_1$ and $\mu_D(v,w)$ arcs from $v_2$ to $w$. Finally add $\mu_D(v,u)$ arcs from $v_1$ to $u$ and $\mu_D(u,v)$ arcs from $u$ to $v_2$, see Figure \ref{weak2linkfig}

It is easy to see that no new cycle is created when going from $D$ to $D^*$ and the 2-cycle $uvu$ disappears. It is also easy to verify that $[D^*,s'_1,\ldots{},s'_k,t'_1,\ldots{},t'_k]$ is a yes-instance if and only if $[D,s_1,\ldots{},s_k,t_1,\ldots{},t_k]$, where we let $s'_i=s_i$ ($t'_i=t_i$) 
if $s_i\neq v$ ($t_i\neq v$) and $s'_i=v_1$ if $s_i=v$ ($t'_i=v_2$ if $t_i=v$) : the $d$ arcs between $v_1$ and $v_2$ allow us to reroute any set of paths containing $v$ in $D$ (including paths starting or ending here) as paths in $D^*$ and conversely any solution in $D^*$ can be converted back to a solution in $D$. In particular, if a path $P'_i$ in $D^*$ contains the subpath $v_1uv_2$ we replace this part by the vertex $v$.

Thus after at most $|V|-1$ repetitions of the operation above we have converted
the original instance into an equivalent acyclic instance.
\end{proof}

\section{Concluding remarks}

Contrary to the case of  undirected graphs where a large class of NP-complete problems become polynomially solvable for graph of bounded tree-width, bounded DAG-width often does not lead to polynomial algorithms. For a discussion on this see \cite{ganianLNCS5917}.
The Hamiltonian cycle problem is polynomially solvable on digraphs of bounded directed tree-width \cite{johnsonJCT82} and hence, by Theorem \ref{ThmDirTree} also on digraphs of bounded DAG-width.

The minimum strong spanning subdigraph (MSSS) problem is the problem of deciding, for a given strong digraph $D=(V,A)$ a spanning strong subdigraph $D'=(V,A')$ with the minimum possible number of arcs. This problem is clearly NP-hard as for general digraphs as it contains the hamiltonian cycle problem as a special case. 
Khuller et al \cite{khullerDAM69} showed that the MSSS problem is NP-hard already for digraphs of circumference at most 5. Hence, by Theorem \ref{main}, we get.\\

\begin{theorem}
The MSSS problem is NP-hard for digraphs of DAG-width at most 5.
\end{theorem}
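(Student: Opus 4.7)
The plan is to give a very short reduction-by-containment argument. The essential point is that the class of digraphs of circumference at most $5$ is contained in the class of digraphs of DAG-width at most $5$, so any NP-hardness result for the former class transfers immediately to the latter.

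Concretely, I would start by recalling the theorem of Khuller, Raghavachari, and Young cited just above (reference \cite{khullerDAM69}), which establishes that deciding whether a strong digraph $D$ of circumference at most $5$ admits a spanning strong subdigraph with at most $m$ arcs is NP-hard. This gives a polynomial-time many-one reduction from (say) SAT to instances $(D,m)$ of MSSS in which $D$ has circumference at most $5$.

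Next, I would invoke Theorem \ref{main}: every digraph $D$ of circumference at most $p$ has DAG-width at most $p$. Applied with $p=5$, each instance $D$ produced by the Khuller--Raghavachari--Young reduction already has DAG-width at most $5$. In particular, the very same reduction witnesses that MSSS is NP-hard on the class of digraphs of DAG-width at most $5$; no further algorithmic work is needed, since the reduction's output is, without modification, a valid input for MSSS restricted to DAG-width $\leq 5$.

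There is essentially no obstacle here beyond making the containment explicit; the real content is hidden in Theorem \ref{main}, which supplies the DAG-width bound for free. One small point worth mentioning in the write-up is that we do not need to actually compute a DAG-decomposition of width at most $5$ for the reduced instances (which would in any case follow from the algorithmic version stated after Theorem \ref{main}); for NP-hardness it suffices that such a decomposition exists, so that the produced instance lies in the target class.
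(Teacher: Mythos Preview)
Your proposal is correct and matches the paper's own argument exactly: the paper simply observes that Khuller et al.\ showed MSSS is NP-hard on digraphs of circumference at most $5$, and then applies Theorem~\ref{main} to conclude these digraphs have DAG-width at most $5$. Your additional remark that one need not actually compute a width-$5$ DAG-decomposition for the reduction to go through is a nice clarification, though the paper does not bother to make it explicit.
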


The following Erd\H{o}s-P\'osa type result was proved recently by Havet and Maia. 
\begin{theorem}\cite{havetsub}
\label{EPdigraphs}
There exists a function $f(k)$ such that every digraph $D$ either has $k$ disjoint cycles of length at least 3 or a set $X$ of size at most $f(k)$ such that $D-X$ has circumference at most 2.
\end{theorem}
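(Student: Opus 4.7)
The plan is to derive Theorem \ref{EPdigraphs} from the directed-cycle Erd\H{o}s-P\'osa theorem of Reed, Robertson, Seymour and Thomas: there exists a function $g$ such that every digraph either has $k$ vertex-disjoint directed cycles or a set of at most $g(k)$ vertices hitting every directed cycle. I would proceed by induction on $k$. The base case $k=1$ is immediate, since a digraph with no cycle of length at least $3$ has circumference at most $2$ by definition and $X=\emptyset$ works.

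For the inductive step, the ideal scenario is to extract a \emph{short} cycle $C$ of length at least $3$, put $V(C)$ aside, and recurse on $D-V(C)$, yielding a bound of the form $f(k)\leq |V(C)|+f(k-1)$. The difficulty is that $D$ may have circumference well above $2$ yet contain no cycle of bounded length at all. To handle this, one applies Reed et al with a sufficiently large parameter $N=N(k)$: either this produces a hitting set of size at most $g(N)$ for \emph{all} directed cycles (in particular all cycles of length $\geq 3$, so $X$ is found), or it yields $N$ vertex-disjoint directed cycles. In the latter case, if at least $k$ of them have length $\geq 3$ we are done, so we may assume fewer than $k$ long cycles and hence more than $N-k$ disjoint $2$-cycles.

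The hard part is this remaining subcase, where the packing is dominated by $2$-cycles. Here I would exploit the fact (implicit in the proof of Theorem \ref{Theorem2CycleVertex}) that the $2$-cycle structure of a strong component of circumference $2$ is tree-like, so that the coexistence of many disjoint $2$-cycles with a long cycle forces ``chord''-like configurations through the $2$-cycles. The natural move is to perform surgery on a long cycle $C$ across such a chord: rerouting $C$ through a $2$-cycle splits it into two cycles, one of which can typically be taken of length $\geq 3$ and disjoint from the rest of the packing, augmenting the count of long cycles. The main technical obstacle is showing that this surgery terminates with either $k$ disjoint cycles of length $\geq 3$ or a hitting set of size bounded by some $f(k)$; this requires a careful potential argument tying together the number of long cycles in the packing and the number of $2$-cycles still ``available'' for surgery, and is what I expect to form the technical core of the Havet--Maia proof in \cite{havetsub}.
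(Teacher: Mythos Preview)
The paper does not prove Theorem~\ref{EPdigraphs}; it is quoted as a result of Havet and Maia \cite{havetsub} and used as a black box (to obtain Corollary~\ref{noklongCD} and the weak-linkage theorem that follows). There is therefore no proof in this paper to compare your proposal against.

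On the proposal itself: the reduction to the Reed--Robertson--Seymour--Thomas theorem is a natural opening, but the step you flag as ``the hard part'' is a genuine gap rather than a routine potential argument. A digraph can carry an unbounded number of vertex-disjoint $2$-cycles while containing only one cycle of length $\geq 3$ (e.g.\ a long path of $2$-cycles with a single triangle attached), so the mere presence of many $2$-cycles in the RRST packing does not force further long cycles. Your surgery idea---rerouting a long cycle through a $2$-cycle chord---requires that some $2$-cycle from the packing sit as a chord on a long cycle disjoint from the rest of the packing, and nothing in the RRST output guarantees this geometry. As written, the sketch does not supply a mechanism that closes this case; whatever Havet and Maia do in \cite{havetsub}, it must involve an additional structural ingredient beyond what you have outlined.
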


From this result and Theorem \ref{main}  we obtain the following by letting $t(k)=f(k)+2$. 
\begin{corollary}
\label{noklongCD}
There exists a function
 $t(k)$ such that digraphs without 
$k$ disjoint cycles of length at least 3 have DAG-width at most $t(k)$.\qed
\end{corollary}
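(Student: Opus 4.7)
The plan is to combine Theorem \ref{EPdigraphs} with Theorem \ref{main} via a standard monotonicity observation: removing a vertex set $X$ from a digraph decreases its DAG-width by at most $|X|$. Since the author suggests $t(k) = f(k) + 2$, it suffices to show that if $D - X$ has DAG-width at most $2$ then $D$ has DAG-width at most $|X| + 2$.

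Concretely, let $D$ be a digraph with no $k$ disjoint cycles of length at least $3$. By Theorem \ref{EPdigraphs}, there exists $X \subseteq V(D)$ with $|X| \leq f(k)$ such that $D - X$ has circumference at most $2$. By Theorem \ref{main}, $D - X$ admits a DAG-decomposition $(H, (X_h)_{h \in V(H)})$ of width at most $2$. I would then form a candidate DAG-decomposition of $D$ by inflating every bag with $X$: set $X'_h := X_h \cup X$ for every $h \in V(H)$, keeping the same underlying DAG $H$.

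The main verification is that $(H, (X'_h)_{h \in V(H)})$ still satisfies D1--D3. Condition D1 is clear since the $X_h$ cover $V(D-X)$ and $X$ is added to every bag. Condition D2 follows because for $h \leq_H h' \leq_H h''$ one has $X'_h \cap X'_{h''} = (X_h \cap X_{h''}) \cup X \subseteq X_{h'} \cup X = X'_{h'}$. For D3, consider an arc $(h,h') \in A(H)$ and an arc $uv$ of $D$ with $u \in X'_{\geq h'} \setminus X'_h$ and $v \notin X'_{\geq h'} \setminus X'_h$; if $v \in X$ then $v \in X'_h \cap X'_{h'}$ automatically, and otherwise $uv$ is an arc of $D-X$ leaving $X_{\geq h'} \setminus X_h$, so the original guarding property gives $v \in X_h \cap X_{h'} \subseteq X'_h \cap X'_{h'}$. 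The root condition is vacuous since the enlarged $X'_{\geq h}$ is all of $V(D)$. The width of the new decomposition is at most $2 + |X| \leq f(k) + 2$, so setting $t(k) := f(k) + 2$ finishes the proof.

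The only delicate step is the verification of D3 above, and this is the part one has to be careful with; everything else is bookkeeping. Once monotonicity of DAG-width under vertex deletion is in hand, the corollary is immediate from Theorems \ref{EPdigraphs} and \ref{main}.
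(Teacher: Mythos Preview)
Your argument is correct and is exactly the justification the paper leaves implicit: the paper simply asserts that $t(k)=f(k)+2$ works by combining Theorems~\ref{EPdigraphs} and~\ref{main}, and the bag-inflation $X'_h=X_h\cup X$ is the standard way to show that deleting a vertex set $X$ drops the DAG-width by at most $|X|$. One small caveat: your sentence ``the root condition is vacuous since the enlarged $X'_{\geq h}$ is all of $V(D)$'' uses that $H$ has a \emph{unique} root, for otherwise an arc from $X$ into $V(D-X)\setminus X_{\geq h}$ would leave $X'_{\geq h}$; this is harmless since, as the paper notes, one may always assume a unique root without increasing the width.
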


From Corollary \ref{noklongCD} and Theorem \ref{CorBoundDag} we obtain.

\begin{corollary}
The $k$-linkage problem is polynomially solvable for all fixed $k$ in the class of digraphs with at most $l$ disjoint cycles of length at least 3.
\end{corollary}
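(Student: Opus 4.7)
The plan is to chain the two structural results that immediately precede this corollary. Corollary \ref{noklongCD} converts a bound on the number of disjoint long cycles into a bound on DAG-width, and Corollary \ref{CorBoundDag} converts bounded DAG-width into a polynomial-time $k$-linkage algorithm, so the proof is essentially the composition of these two black boxes.

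First I would observe that any digraph $D$ with at most $l$ disjoint cycles of length at least 3 certainly does not contain $l+1$ such disjoint cycles. Applying Corollary \ref{noklongCD} with parameter $l+1$ therefore bounds the DAG-width of $D$ by the constant $t(l+1)$, which depends only on $l$ and not on $D$. Hence the class of digraphs described in the statement is contained in the class of digraphs of DAG-width at most $t(l+1)$.

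Second, since $k$ is a fixed natural number and the bound $t(l+1)$ is independent of the input, Corollary \ref{CorBoundDag} directly yields a polynomial-time algorithm for the $k$-linkage problem on this subclass, which finishes the proof.

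The only subtle point worth flagging is that Corollary \ref{noklongCD} is non-constructive in the sense that it supplies the width bound $t(l+1)$ without handing us a DAG-decomposition witnessing it. This is however absorbed into Corollary \ref{CorBoundDag} via Theorem \ref{ThmDirTree} and the Johnson--Robertson--Seymour--Thomas algorithm for digraphs of bounded directed tree-width, which is the route by which Corollary \ref{CorBoundDag} was obtained in the first place. Thus no new structural or algorithmic ingredient is needed beyond quoting the two prior corollaries.
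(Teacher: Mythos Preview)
Your proof is correct and matches the paper's approach exactly: the paper simply states that the corollary follows from Corollary~\ref{noklongCD} and Corollary~\ref{CorBoundDag}, which is precisely the composition you spell out. Your extra remark about constructivity is a nice clarification but not needed, since Corollary~\ref{CorBoundDag} already packages the passage through directed tree-width.
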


A similar result holds for the weak $k$-linkage problem but here more work is required.
 
\begin{theorem}
For every choice of natural numbers $k,l$ the weak $k$-linkage problem is solvable in polynomial time in the class of digraphs with no set of $l$ disjoint cycles all of length at least 3. 
\end{theorem}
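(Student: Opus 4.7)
The plan is to combine Theorem~\ref{EPdigraphs} with Theorem~\ref{weaklcirc2} by enumerating how a candidate solution interacts with a small ``circumference-breaker'' set $X \subseteq V(D)$. Since $D$ has no $l$ disjoint cycles of length at least $3$, Theorem~\ref{EPdigraphs} supplies a set $X \subseteq V(D)$ with $|X| \le f(l)$ such that $D - X$ has circumference at most $2$. I would locate such an $X$ in polynomial time (for fixed $l$) by exhaustive search over the $O(|V|^{f(l)})$ subsets of $V(D)$ of size at most $f(l)$, checking the circumference of $D - X$ in each case; this check is polynomial since a digraph has circumference at most $2$ iff each of its non-trivial strong components is, in the underlying graph, a tree whose edges are $2$-cycles (as in the proof of Theorem~\ref{Theorem2CycleVertex}).

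Next I would guess the interaction pattern of the hypothetical solution $(P_1,\dots,P_k)$ with $X$. After first pruning parallel arcs to multiplicity at most $k$ (no weak $k$-linkage can use more), note that each simple path $P_i$ meets $X$ in an ordered sequence of at most $|X|$ distinct vertices, and between consecutive $X$-visits it either uses a single arc of $D[X]$ or leaves $X$ at some $y_{i,j} \in V \setminus X$, traverses $D - X$, and re-enters $X$ at some $y'_{i,j} \in V \setminus X$. A \emph{pattern} records, for every $i \in [k]$, the ordered $X$-visit sequence of $P_i$ (with $s_i$ or $t_i$ treated as the initial or terminal entry of this sequence when it lies in $X$) and, for every segment between two consecutive entries, either the arc of $D[X]$ used or the pair of transition arcs $(x\to y_{i,j},\; y'_{i,j}\to x')$. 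The total number of patterns is $|V|^{O(kf(l))}$, a polynomial for fixed $k$ and $l$.

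For each pattern I would test feasibility in two stages. First, check that all specified arcs inside $D[X]$ or incident with $X$ are pairwise distinct as arcs of $D$ (using the multiplicity bookkeeping $\mu_D$). Second, form the residual weak linkage instance on $D - X$ whose source-sink multiset is $\{(y_{i,j},y'_{i,j})\}_{i,j}$ from the external segments, with the natural adaptations $(s_i,y'_{i,0})$, $(y_{i,m_i},t_i)$, or $(s_i,t_i)$ when $s_i$ or $t_i$ lies outside $X$ (including the last case when $P_i$ avoids $X$ entirely). This multiset has size at most $k(|X|+1) = O(kf(l))$, a constant; and since $D - X$ has circumference at most $2$, Theorem~\ref{weaklcirc2} decides this residual instance in polynomial time. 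The original instance is a yes-instance iff some pattern passes both stages.

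The main obstacle is the pattern enumeration itself: I must argue simultaneously that every valid arc-disjoint linkage is represented by some enumerated pattern and, conversely, that every feasible pattern can be reassembled — by concatenating the chosen $D[X]$-arcs, the transition arcs, and the arc-disjoint paths produced by Theorem~\ref{weaklcirc2} — into a genuine weak $k$-linkage in $D$. Subtleties include terminals lying in $X$, degenerate external segments of the form $x\to y\to x'$ (which contribute only the trivial pair $(y,y)$ to the residual instance), and parallel arcs, all of which can be absorbed into the bookkeeping once the multiplicity cap at $k$ has been applied.
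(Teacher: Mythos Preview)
Your proposal is correct and follows essentially the same route as the paper's (sketch) proof: use Theorem~\ref{EPdigraphs} to obtain a bounded-size set $X$ with $D-X$ of circumference at most~$2$, enumerate the (polynomially many) ways the $k$ paths can cross between $X$ and $V\setminus X$, handle the pieces inside $\induce{D}{X}$ by brute force, and reduce the remaining pieces to a weak linkage instance of bounded size in $D-X$ solved via Theorem~\ref{weaklcirc2}. Your version is in fact more explicit about the bookkeeping (multiplicity cap, terminals in $X$, degenerate segments) than the paper's sketch.
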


\begin{proof} (sketch) 
Let $[D,s_1,\ldots{},s_k,t_1,\ldots{},t_k]$ be an instance of the weak $k$-linkage problem where $D$ is a digraph with no set of $l$ disjoint cycles of length at least 3 (it may contain arbitrarily many disjoint 2-cycles).
By Theorem \ref{EPdigraphs} there exists a set $X\subseteq V(D)$ of size at most $f(l)$ such that $D'=D-X$ has circumference 2 and we can find such a set in polynomial time (this follows from the proof in \cite{havetsub} but we could also just try all subsets of size at most $f(l)$).

Now suppose $[D,s_1,\ldots{},s_k,t_1,\ldots{},t_k]$ is a yes-instance and that
$P_1,\ldots{},P_k$ is a set of arc-disjoint paths forming a solution. Each $P_i$
can visit $X$ between $0$ and $|X|$ times, the first case corresponding to the path either staying completely inside $X$ or avoiding it altogether. We will not cover all the possibilities, but the idea should be clear from the description below. Suppose below that every $P_i$ starts and ends in $V-X$ (that is, $\{s_1,\ldots{},s_k,t_1,\ldots{},t_k\}\cap X=\emptyset$) and visits $X$ some number $r_i$ with  $0\leq r_i\leq |X|$ times (the case when some pairs $s_i,t_i$ intersect $X$ is easily adapted from the case below). If $r_i=0$ $P_i$ is just an $(s_i,t_i)$-path in $D'$ so suppose $r_i\geq 1$. Then $P_i=P_{i,1}a_{i,1}Q_{i,1}a'_{i,1}P_{i,2}a_{i,2}\ldots Q_{i,r}a'_{i,r_i}P_{i,r_i}$ where $a_{i,q}=u_{i,q}\tilde{u}_{i,q}$, $a'_{i,q}=\tilde{v}_{i,q}v_{i,q}$ are arcs respectively from $V-X$ to $X$ and from $X$ to $V-X$, $P_{i,1}$ is an $(s_i,u_{i,1})$-path, $P_{i,r_i}$ is a $(v_{i,r_i},t_i)$-path, $P_{i,j}$, $j\in [r_i]$,
is a $(v_{i,j-1}u_{i,j})$ path all in $D'$ and each $Q_{i,j}$, $j\in [r_i]$, is a $(\tilde{u}_{i,j},\tilde{v}_{i,j})$-path in the subdigraph $\induce{D}{X}$ induced by $X$. 
So  $P_i$ corresponds to $r_i+1$ arc-disjoint paths in $D'$ and $r_i$ arc-disjoint paths in $\induce{D}{X}$. Similarly for the remaining paths $P_j$. Thus the solution $P_1,\ldots{},P_k$ in $D$ corresponds to a solution to a weak $r$-linkage problem in $D'$, where $r$ represents the total number of subpaths of the $P_i$'s that lie inside $D'$ and a solution to a weak linkage problem in $X$ (the paths of the form $Q_{i,j}$). Clearly the converse also holds, if we have such a collection of paths in $D'$ and $\induce{D}{X}$ and the appropriate set of distinct arcs going to and from $X$, then $[D,s_1,\ldots{},s_k,t_1,\ldots{},t_k]$ is a yes-instance. Since  $\induce{D}{X}$ has bounded size, we can check any combination of paths here in constant time. Also $r$ cannot be larger than $k|X|+k$ (since no $P_i$ visits $X$ more than $|X|$ times) and hence the corresponding weak $r$-linkage problem for $D'$ can be solved in polynomial time via the algorithm from Theorem \ref{weaklcirc2}.

Hence by considering all possible choices of $r_i$ arc pairs $a_{i,j},a'_{i,j}$ for $i\in [k]$, $j\in [r_i]$ (all distinct) and solving each of the corresponding weak linkage problems we obtain a polynomial algorithm for the weak $k$-linkage problem (there are at most $|X|^{2k}$ different choices for the $r_i$'s, each involving at most $2k|X|$ arcs between $X$ and $V-X$ and hence it suffices to solve a polynomial number of weak linkage problems in $D'$).
\end{proof}

\bibliography{refs}
\section{Appendix}


To illustrate how to obtain a DAG-decomposition from the strategy for the cops used in the proof of Theorem \ref{main} we will now give an example to show how 
the DAG decomposition can be obtained by using this strategy. In figure \ref{exampleGraphD} the digraph $D$ on 32 vertices is shown. $D$ has circumference $4$ and consist of six strong components. The vertex sets of the six  strong components (listed according to an acyclic ordering) are
$S_1=\{1,2\}$, $S_2=\{27,28\}$, $S_3=\{5,6\}$, $S_4=\{3,4\}$, $S_5=\{7,8,9,10,11,12,13,14,15,16,17,18\}$ and $S_6=\{19,20,21,22,23,24,25,26,29,30,31,32\}$, and the strong component digraph $SC(D)$ (obtained by contracting each strong component into a vertex) is

\begin{figure}[h]
\centering
   \begin{tikzpicture}[scale=0.4]
    \tikzstyle{every node}=[font=\tiny,circle, draw, inner sep=1.5pt]
    \node (v1) at  (0,5){$S_1$};
    \node (v2) at  (0,0){$S_2$};
    \node (v3) at  (5,4){$S_3$};
    \node (v4) at  (10,4){$S_4$};
    \node (v5) at  (15,6){$S_5$};
    \node (v6) at  (12,2){$S_6$};
    \draw[->, style=-latex] (v1) to (v3);
    \draw[->, style=-latex] (v1) to (v5);
    \draw[->, style=-latex] (v2) to (v6);
    \draw[->, style=-latex] (v3) to (v4);
    \draw[->, style=-latex] (v3) to (v6);
    \draw[->, style=-latex] (v4) to (v5);
   \end{tikzpicture}
\end{figure}

Now for each strong component we use the strategy described in the proof of Theorem \ref{main}. Notice that $S_1$, $S_2$, $S_3$ and $S_4$ all consist of $2$ vertices and hence here, in turn, we just place a cop on each vertex and after this we have chased the robber into either $S_5$ or $S_6$. Now in $\induce{D}{S_5}$ a maximal  cycle is formed by the vertices $\{15,16,17,18\}$ and we place a cop on each of these vertices. Now either the robber is in $\{7,8,11,12\}$ or in $\{9,10,13,14\}$. In both cases we need only keep the cop on vertex $15$ to keep the robber in the same strong component. Say the robber is in $\{7,8,11,12\}$ then in the next step we place a cop on $12$, and note that 
the maximal  cycle containing $\{12,15\}$ is the 2-cycle formed by these vertices. Now the cop in vertex $15$ can be lifted and placing a cop on $11$ and finding a maximal cycle containing $11,12$ gives the cycle $\{7,8,11,12\}$ witch means that we have caught the robber and are done. The argument is symmetric for $\{9,10,13,14\}$. The DAG-decomposition of this strong component is seen in the subtree with root $\{15,16,17,18\}$ in the total DAG-decomposition in figure \ref{exampleDagOfD}. For the DAG-decomposition of $S_6$ we start by placing four cops on  the maximal cycle $C$ formed by $\{19,20,23,24\}$. Placing the cops on $C$ forces the robber to move to one of the 
two strong components $\{29,30\}$ and $\{21,22,25,26,31,32\}$ of $D-V(C)$. For the first of these the cops in vertex $\{19,20,23\}$ are lifted and one is placed on $29$ and in next step the cop on $24$ is lifted and placed on $30$ and we are done. With the second strong component it is instead the cops on vertex $19,23,24$ that are lifted and one is placed on  $21$. Then the cop on $20$ is lifted and cops are placed on $22,25,26$. The last two steps are equivalent to what we did after covering $\{19,20,23,24\}$.	

Now that we have found these DAG decompositions of each of the strong components, we can combine the DAG decomposition and $SC(D)$ to obtain the DAG decomposition for the hole graph. This is seen in figure \ref{exampleDagOfD}

\begin{figure}[h]
\centering
\begin{subfigure}[b]{0.2\textwidth}
\centering
  \begin{tikzpicture}[scale=0.2]
  \tikzstyle{every node}=[font=\tiny,circle, draw, fill,inner sep=1.5pt]
  \node[label=above left:{$1$}] (v1) at  (5,36){};
  \node[label=above right:{$2$}] (v2) at  (9,36){};
  \node[label=above left:{$3$}] (v3) at  (16,36){};
  \node[label=above right:{$4$}] (v4) at  (20,36){};
  \node[label=left:{$5$}] (v5) at  (5,32){};
  \node[label=right:{$6$}] (v6) at  (20,32){};
  \node[label=above left:{$7$}] (v7) at  (8,29){};
  \node[label=above right:{$8$}] (v8) at  (10,29){};
  \node[label=below left:{$11$}] (v11) at  (8,27){};
  \node[label=below:{$12$}] (v12) at  (10,27){};
  \node[label=above left:{$9$}] (v9) at  (15,29){};
  \node[label=above right:{$10$}] (v10) at  (17,29){};
  \node[label=below:{$13$}] (v13) at  (15,27){};
  \node[label=below right:{$14$}] (v14) at  (17,27){};
  \node[label=above:{$15$}] (v15) at  (12.5,25){};
  \node[label=right:{$17$}] (v17) at  (14,23.5){};
  \node[label=left:{$16$}] (v16) at  (11,23.5){};
  \node[label=below:{$18$}] (v18) at  (12.5,22){};
  \node[label=above left:{$19$}] (v19) at  (0,20){};
  \node[label=above left:{$20$}] (v20) at  (5,20){};
  \node[label=below left:{$23$}] (v23) at  (0,17){};
  \node[label=below left:{$24$}] (v24) at  (5,17){};
  \node[label=above right:{$21$}] (v21) at  (20,20){};
  \node[label=above right:{$22$}] (v22) at  (25,20){};
  \node[label=below right:{$25$}] (v25) at  (20,17){};
  \node[label=below right:{$26$}] (v26) at  (25,17){};
  \node[label=above:{$27$}] (v27) at  (10,5){};
  \node[label=above:{$28$}] (v28) at  (15,5){};
  \node[label=below:{$29$}] (v29) at  (5,0){};
  \node[label=below:{$30$}] (v30) at  (10,0){};
  \node[label=below:{$31$}] (v31) at  (15,0){};
  \node[label=below:{$32$}] (v32) at  (20,0){};
  \draw[] (v1) to (v2);
  \draw[] (v3) to (v4);
  \draw[->, style=-latex] (v1) to (v5);
  \draw[->, style=-latex] (v2) to (v8);
  \draw[->, style=-latex] (v3) to (v9);
  \draw[->, style=-latex] (v6) to (v4);
  \draw[] (v5) to (v6);
  \draw[] (v7) to (v8);
  \draw[] (v7) to (v11);
  \draw[] (v8) to (v12);
  \draw[] (v11) to (v12);
  \draw[] (v9) to (v10);
  \draw[] (v9) to (v13);
  \draw[] (v10) to (v14);
  \draw[] (v13) to (v14);
  \draw[] (v12) to (v15);
  \draw[] (v13) to (v15);
  \draw[] (v15) to (v16);
  \draw[] (v15) to (v17);
  \draw[] (v15) to (v18);
  \draw[] (v16) to (v17);
  \draw[] (v16) to (v18);
  \draw[] (v17) to (v18);
  \draw[->, style=-latex] (v5) to (v20);
  \draw[<-, style=-latex] (v6) to (v21);
  \draw[] (v19) to (v20);
  \draw[] (v19) to (v23);
  \draw[] (v20) to (v24);
  \draw[] (v23) to (v24);
  \draw[] (v21) to (v22);
  \draw[] (v21) to (v25);
  \draw[] (v25) to (v26);
  \draw[] (v22) to (v26);
  \draw[] (v20) to (v21);
  \draw[] (v24) to (v29);
  \draw[] (v25) to (v32);
  \draw[] (v29) to (v30);
  \draw[->, style=-latex] (v27) to (v30);
  \draw[] (v27) to (v28);
  \draw[->, style=-latex] (v28) to (v31);
  \draw[] (v31) to (v32);
  \end{tikzpicture}
\caption{A digraph $D$. Undirected edges correspond to 2-cycles.}\label{exampleGraphD}
\end{subfigure} \qquad \qquad \qquad
\begin{subfigure}[b]{0.6\textwidth}
\centering
  \begin{tikzpicture}[scale=0.3]
  \tiny
  \tikzstyle{every node}=
  [
  circle, 
  draw, 
  text width=2.5em,
  text centered,
  minimum height=3em,
  anchor=mid, 
  ]
  \node (root) at  (20,42){$\emptyset$};
  \node (1) at  (15,40){27 28};
  \node (2) at  (25,40){1 2};
  \node (3) at  (20,38){5 6};
  \node (4) at  (20,32){3 4};
\node (5) at  (25,30){15 16 17 18};
  \node (6) at  (22,25){12 15};
  \node (7) at  (28,25){13 15};
  \node (8) at  (22,20){7 8 11 12};
  \node (9) at  (28,20){9 10 13 14};
  \node (10) at (13,32){19 20 23 24};
  \node (11) at  (10,27){24 29};
  \node (12) at  (10,22){29 30};
  \node (13) at  (16,27){20 21};
  \node (14) at  (16,23){21 22 25 26 };
  \node (15) at  (16,19){25 32};
  \node (16) at  (16,15){31 32};


  \draw[->, style=-latex] (root) to (1);
  \draw[->, style=-latex] (root) to (2);
  \draw[->, style=-latex] (2) to (3);
  \draw[->, style=-latex] (3) to (4);
  \draw[->, style=-latex] (4) to (5); 
  \draw[->, style=-latex] (2) to (5);
  \draw[->, style=-latex] (5) to (6);
  \draw[->, style=-latex] (5) to (7);
  \draw[->, style=-latex] (6) to (8);
  \draw[->, style=-latex] (7) to (9);

  \draw[->, style=-latex] (1) to (10);
  \draw[->, style=-latex] (10) to (11);
  \draw[->, style=-latex] (10) to (13);
  \draw[->, style=-latex] (11) to (12);
  \draw[->, style=-latex] (13) to (14);
  \draw[->, style=-latex] (14) to (15);
  \draw[->, style=-latex] (15) to (16);

  \draw[->, style=-latex] (3) to (10);
  \end{tikzpicture}        
  \caption{A DAG decomposition of $D$ obtained via the cops and robber game.}\label{exampleDagOfD}
\end{subfigure}
\caption{}
\end{figure}

\end{document}